\newtheorem{definition}{Definition}
\newtheorem{example}{Example}
\newtheorem{proposition}{Proposition}
\newtheorem{lemma}{Lemma}
\newtheorem{theorem}{Theorem}
\newtheorem{corollary}{Corollary}
\title{\LARGE \bf
Optimal Representation for Right-to-Left \\ 
Parallel Scalar Point Multiplication
}
\author{
\IEEEauthorblockN{Kittiphon Phalakarn}
\IEEEauthorblockA{Department of Computer Engineering\\
Chulalongkorn University\\
email: kittiphon.p@student.chula.ac.th\vspace{-0.4cm}}
\and
\IEEEauthorblockN{Kittiphop Phalakarn}
\IEEEauthorblockA{Department of Computer Engineering\\
Chulalongkorn University\\
email: kittiphop.ph@student.chula.ac.th \vspace{-0.4cm}}
\and
\IEEEauthorblockN{Vorapong Suppakitpaisarn}
\IEEEauthorblockA{Department of Computer Science\\
The University of Tokyo \\
email: vorapong@is.s.u-tokyo.ac.jp\vspace{-0.4cm}}}
\begin{document}

\maketitle 
\thispagestyle{empty}
\pagestyle{empty}

\begin{abstract}
This paper introduces an optimal representation for a right-to-left parallel elliptic curve scalar point multiplication. The right-to-left approach is easier to parallelize than the conventional left-to-right approach. However, unlike the left-to-right approach, there is still no work considering number representations for the right-to-left parallel calculation. By simplifying the implementation by Robert, we devise a mathematical model to capture the computation time of the calculation. Then, for any arbitrary amount of doubling time and addition time, we propose algorithms to generate representations which minimize the time in that model. As a result, we can show a negative result that a conventional representation like NAF is almost optimal. The parallel computation time obtained from any representation cannot be better than NAF by more than 1\%.
\end{abstract}

\renewcommand\IEEEkeywordsname{Keywords}
\begin{IEEEkeywords}
information and communication security, efficient implementation, elliptic curve cryptography, scalar point multiplication, binary representation, parallel algorithms
\end{IEEEkeywords}

\section{Introduction}\label{introduction}

Scalar point multiplication $nP$ 
is an 
important operation in elliptic curve cryptography. Many techniques were proposed to improve this operation. Since binary representation of $n$ affects the speed of ``double-and-add" scalar point multiplication, there are techniques, including sliding window \cite{slidingwindow1,slidingwindow2}, non-adjacent form (NAF) \cite{naf}, window NAF ($w$NAF) \cite{wnaf}, and fractional $w$NAF \linebreak (f-$w$NAF) \cite{fwnaf}, trying to improve the binary representations. 

Elliptic curve scalar point multiplication is a special case of exponentiation 
$g^n$ for 
a group member $g$. 
When 
$g^{-1}$ can be calculated easily, many algorithms for double-and-add, including 
the proposed one, can be applied to reduce the computation of the square-and-multiply for the exponentiation. Although all notations in this paper are for scalar point multiplication, our contribution also includes those general arithmetic operations.

To improve scalar point multiplication further, parallelism is used. 
Many schemes using parallelism \cite{garcia2002parallel}, \cite{nocker}, \cite{borges2017parallel} were proposed.
However, these algorithms can be significantly improved if the number of processors we have is two (see proof in Section~\ref{nocker}). A parallel algorithm 
is also proposed 
in~\cite{izu2002fast}. However, the authors aim to find an algorithm that can resist side channel attacks. 
By that, the algorithm requires more steps and is slightly slower than those that do not resist the attacks.

Most of the work 
are based on left-to-right approach, because, 
the technique can be sped up using precomputation points~\cite{wrNAF}. However, when 
parallelism is used, right-to-left technique is known to be easier \cite{inscrypt}. There are not many works using 
right-to-left technique. The only technique we know is 
in~\cite{munro}, which we call as ``parallel double-and-add". Although the technique is optimal, 
it assumes that point doubling and addition use same amount of time. This is 
not true for scalar point multiplication, 
e.g. in the fastest 
scalar point multiplication, twisted Edwards curve, addition takes 50\% more time than doubling \cite{twistedEd}. In other curves, 
addition takes significantly more time than doubling \cite{web}. Thus, this algorithm is not optimal for scalar point multiplication.

That motivates us to consider an optimal representation for parallel right-to-left scalar point multiplication. 
Based on right-to-left method by Moreno and Hasan in~\cite{moreno2011spa} and its implementation by Robert in~\cite{inscrypt}, 
we propose a new time model and problem for parallel scalar point multiplication with arbitrary amount of doubling time $D$ and addition time $A$. Then, we show a negative result that 
NAF is almost optimal for any arbitrary amount of doubling time and addition time.

To show that NAF is almost optimal, we propose algorithms to generate optimal representations for all cases. 
Then, we prove and perform numerical experiment to show that the average 
time obtained from NAF is very close to the optimal. 
The difference between the average 
time obtained from the algorithms is not more than $1\%$ in all experimental settings.

Although techniques such as $w$NAF or f-$w$NAF provide much faster scalar point multiplications than NAF in left-to-right setting~\cite{nocker}, they 
are not better in our parallel setting. Our results indicate that there is no representation can improve the average computation time of NAF by more than $1\%$, while, in those schemes, we have to perform some precomputation tasks before the calculation for scalar point multiplication.


\section{Preliminaries}\label{preliminaries}

\subsection{Binary Representation}\label{binary}

We first introduce the notation of binary representation used in this paper. We assume that $n$ is an input and all representations are of $n$ if not state otherwise.

\begin{definition}[Binary representation of $n$ using digit set $\mathcal{S}$]
Let $n,\lambda \in \mathbb{Z}_{\ge 0}$ and $\{0,1\} \subseteq \mathcal{S} \subseteq \mathbb{Z}$. The set of all possible binary representations of $n$ using digit set $\mathcal{S}$ is $\mathscr{N}_\mathcal{S} = \{N_\mathcal{S} = n_{\lambda}...n_0\}$ such that $\displaystyle \sum_{i=0}^{\lambda}n_i2^i=n$, $n_i \in \mathcal{S}$ for all $0 \le i \le \lambda$. We use $\bar{s}$ instead of $-s$ to simplify the notation. If $\mathcal{S}=\mathcal{B}=\{0,1\}$, we call $\mathscr{N}_\mathcal{B}=\{N_\mathcal{B}\}$ set of all binary representations of $n$. If $\mathcal{S}=\mathcal{C}=\{\bar{1},0,1\}$, we call $\mathscr{N}_\mathcal{C}=\{N_\mathcal{C}\}$ set of all canonical binary representations of $n$.
\end{definition}

Note that $N_\mathcal{B}$ is unique while $N_\mathcal{C}$ is not. In this paper, we use regular expression to represent 
patterns of digits, e.g. $10^3\bar{1}$ means $1000\bar{1}$, and $1^*$ means zero or more `1's.

\subsection{Non-Adjacent Form (NAF)}\label{naf}

NAF \cite{naf} is one technique used to improve scalar point multiplication. It changes $N_\mathcal{B}$ to $N_\mathcal{C}$ with no consecutive non-zero digits which is proved to have minimal Hamming weight \cite{naf}. The algorithm can be described as in Algorithm 1. Note that NAF representation is unique, and we have a starting index $i$ as input for further use in our proposed algorithms.

\begin{algorithm}[ht]
\DontPrintSemicolon
\caption{Transform binary representation to non-adjacent form (toNAF)}
\SetKwInOut{Input}{input}
\SetKwInOut{Output}{output}
\Input{$N_\mathcal{B} = n_\lambda...n_0$, starting index $i$}
\Output{$N'_\mathcal{C} = n'_{\lambda+1}...n'_0$ with no consecutive non-zero digits in $n'_{\lambda+1}...n'_i$}
\Begin{
	$N'_\mathcal{C} \leftarrow N_\mathcal{B}$\;
	$n'_{\lambda+1} \leftarrow 0$\;
	\While {$i<\lambda$} {
        \If {$n'_i=1$ \textnormal{\textbf{and}} $n'_{i+1}=1$} {
			$n'_i \leftarrow \bar{1}$\;
            $i \leftarrow i+1$\;
            \While {$n'_i=1$} {
            	$n'_i \leftarrow 0$\;
                $i \leftarrow i+1$\;
			}
            $n'_i \leftarrow 1$\;
        }
        \lElse {$i \leftarrow i+1$}
    }
    \Return $N'_\mathcal{C}$
}
\end{algorithm}

In short, Algorithm 1 changes $01^p$ to $10^{p-1}\bar{1}$ for $p \ge 2$ from $n'_i$ to $n'_{\lambda+1}$. To get NAF representation, we set $i=0$.

\begin{example}
Consider $n=371$ with $N_\mathcal{B}=101110011$. If $i=0$, we have $N'_\mathcal{C}=10\bar{1}00\bar{1}010\bar{1}$, and if $i=5$, we have $N'_\mathcal{C}=10\bar{1}0\bar{1}10011$. \hfill $\square$
\end{example}

\subsection{``Parallel Double-and-Add" Scalar Point Multiplication}\label{double_and_add}

Borodin and Munro \cite{munro} presented Theorem 1 with a technique we can apply to ``parallel double-and-add" technique. This technique does scalar point multiplication in $\lceil\log_2 n\rceil$ steps. Notice that the time is measured in ``steps" which means that point doubling and addition use same amount of time.

\begin{theorem}[adapt from Lemma 6.1.1 in \cite{munro}]
Let $n \in \mathbb{Z}_+$ and $P$ be elliptic point. $nP$ can be computed from $P$ using two processors in $\lceil \log_2 n \rceil$ steps.
\end{theorem}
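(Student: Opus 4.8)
The plan is to exhibit an explicit two-processor schedule based on the right-to-left binary method and to bound its makespan. Write $n$ in its unique binary representation $N_\mathcal{B} = n_\lambda \dots n_0$ with $n_\lambda = 1$, so that $\lambda = \lfloor \log_2 n \rfloor$. The computation splits naturally into two chains: the doubling chain $P, 2P, 4P, \dots, 2^\lambda P$, in which each term depends on the previous one, and the accumulation of the partial sums $\sum_{n_i = 1} 2^i P$. I would dedicate one processor to the doubling chain and the other to the additions, then argue that the two can be interleaved tightly enough that the whole computation finishes in $\lceil \log_2 n \rceil$ steps.

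First I would let Processor~1 perform the $i$-th doubling during step $i$, for $i = 1, \dots, \lambda$, so that the point $2^i P$ becomes available at the end of step $i$. Let $w$ be the number of nonzero digits (the Hamming weight) and let $i_0 < i_1 < \dots < i_{w-1} = \lambda$ be their positions. Processor~2 maintains the running partial sums $R_0 = 2^{i_0}P$ and $R_j = R_{j-1} + 2^{i_j}P$, producing the final answer $R_{w-1} = nP$. I would schedule the addition producing $R_j$ at step $i_j + 1$.

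The key step is the dependency check. The operand $2^{i_j}P$ is produced by Processor~1 at the end of step $i_j$, hence it is available at step $i_j+1$; and the previous partial sum $R_{j-1}$, produced at step $i_{j-1}+1 \le i_j < i_j+1$, is already available as well. Because the exponents $i_1 < \dots < i_{w-1}$ are strictly increasing, the assigned steps $i_j+1$ are distinct, so Processor~2 performs at most one addition per step. The last addition, producing $nP$, occurs at step $i_{w-1}+1 = \lambda+1$.

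Finally I would read off the makespan. If $n$ is a power of two then $w = 1$, no additions are needed, and the doubling chain alone finishes in $\lambda = \lceil \log_2 n\rceil$ steps. Otherwise $w \ge 2$, the computation ends with the final addition at step $\lambda + 1$, and since $n$ is not a power of two we have $\lceil \log_2 n \rceil = \lfloor \log_2 n\rfloor + 1 = \lambda + 1$. In both cases the makespan equals $\lceil \log_2 n\rceil$, as claimed. I expect the only real subtlety to be this scheduling argument, namely that the inherently sequential accumulation never has to wait on the doubling chain, which holds precisely because distinct bit positions yield distinct and suitably spaced availability times.
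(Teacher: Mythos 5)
Your proof is correct and is essentially a careful formalization of the same construction the paper itself sketches in Section II-C: one processor runs the doubling chain, the other accumulates the nonzero-bit contributions right to left, with the least significant `1' handled by a free copy so that the final addition lands at step $\lambda+1=\lceil\log_2 n\rceil$ (or step $\lambda$ when $n$ is a power of two). The dependency/scheduling argument you spell out is exactly the content the paper leaves implicit, so no further comparison is needed.
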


``Parallel double-and-add" technique uses two processors: doubling processor and addition processor. Doubling processor calculates $2P, 4P, ..., 2^iP$, and addition processor adds the result from doubling processor cumulatively according to $N_\mathcal{B}=n_\lambda...n_0$. 
Note that for the least significant `1', e.g. $n_0$ in Example 2, addition processor can ``copy" the result from doubling processor with no addition. 

\begin{example}
To calculate $87P$ using $N_\mathcal{B}=1010111$, we need 7 steps as depicts in Fig. 1.
\begin{figure}[ht]
    \centering
    \includegraphics[width=0.8\linewidth]{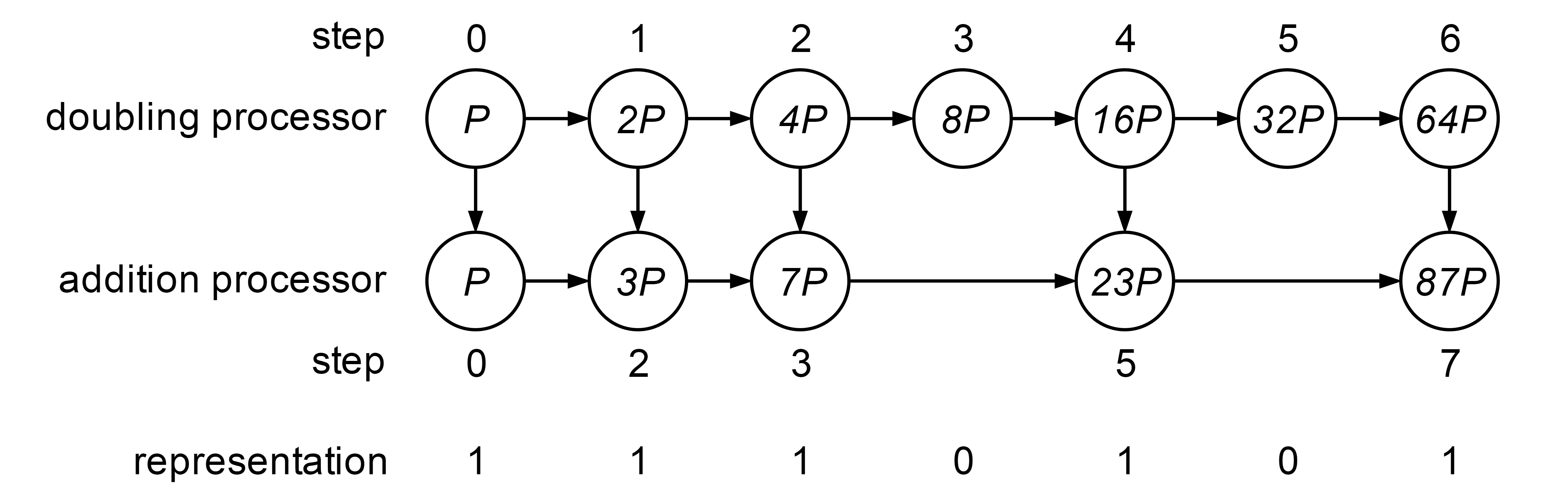}
    \caption{``Parallel double-and-add" scalar point multiplication for $N_\mathcal{B}=1010111$}
\end{figure}
\hfill $\square$
\end{example}

It is known that, in every elliptic curves, point addition takes more computation time than doubling. Hence, this applied model may not be practical for scalar point multiplication. 

\section{Our Calculation Model}\label{our_model}

We improve the model 
by defining the time used in ``parallel double-and-add" scalar point multiplication as follows.

Let $D\in \mathbb{R}_{\ge 0}$ be the amount of time doubling processor uses for one doubling, and $A\in \mathbb{R}_{\ge 0}$ be the amount of time addition processor uses for one addition. To calculate $nP$ using $N_\mathcal{S}$, addition processor considers each digit from $n_0$ to $n_\lambda$. Before we encounter the least significant non-zero digit, the time used is 0. At the least significant non-zero digit $n_i$, we wait for $2^iP$ to be finished at time $iD$, copy $2^iP$ or $-2^iP$ to addition processor using negligible additional time, and add/subtract $2^iP$ to/from addition processor $|n_i|-1$ times, so the computation time is now $iD+(|n_i|-1)A$. For later zero digits, the computation time is unchanged. And, for other non-zero digits $n_j$, we need to wait until doubling processor finishes $2^jP$ at time $jD$ and until addition processor finishes its previous work, then we add/subtract $2^jP$ to/from the current result $|n_j|$ times which uses $|n_j|A$ time units. 

\begin{example}
To calculate $87P$ using $\mathcal{S}=\{\bar{3},\bar{2},\bar{1},0,1,2,3\}$ and $N_\mathcal{S}=220\bar{1}\bar{3}1$ with $D=2$ and $A=3$, we need 26 time units as shown in Fig. 2. \hfill $\square$
\end{example}

\begin{figure}[ht]
    \centering
    \includegraphics[width=0.8\linewidth]{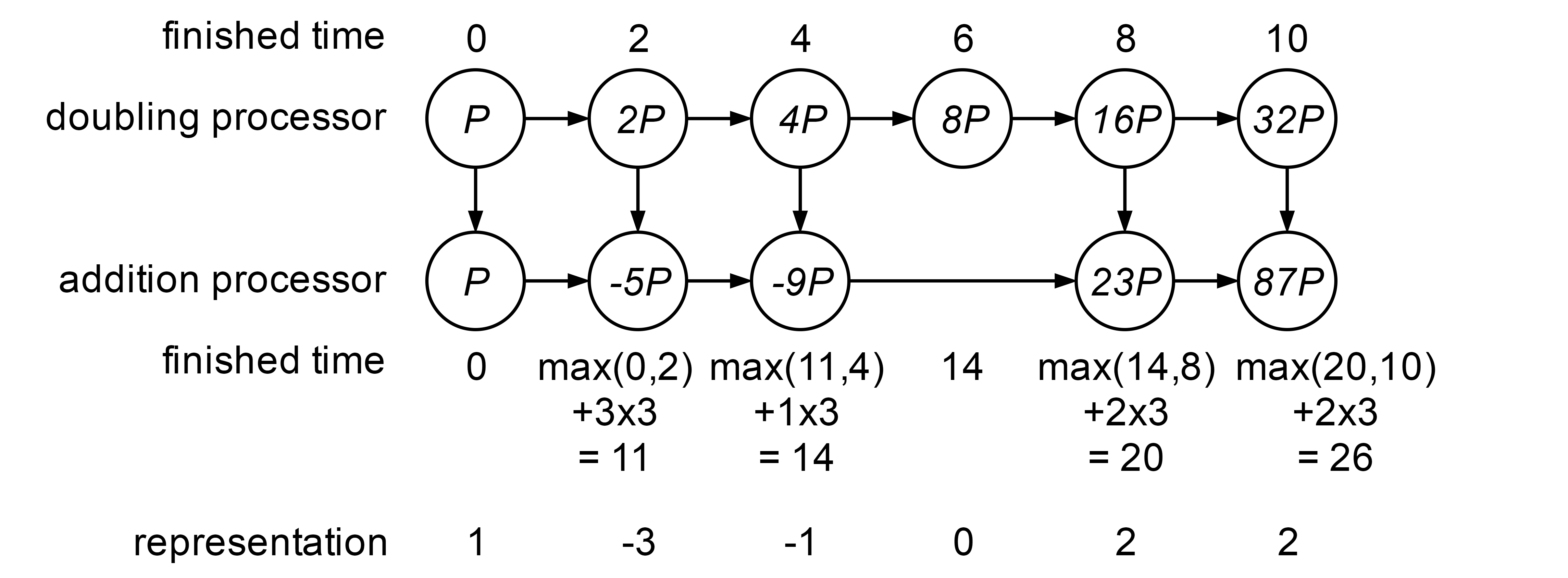}
    \caption{``Parallel double-and-add" scalar point multiplication for $D=2$, $A=3$, and $N_\mathcal{S}=220\bar{1}\bar{3}1$}
\end{figure}

Hence, we define the time model as follows:

\begin{definition}[Computation time of parallel scalar point multiplication]
Let $D\in \mathbb{R}_{\ge 0}$ be the time processor used for one doubling, $A\in \mathbb{R}_{\ge 0}$ be the time used for one addition, and $n \in \mathbb{Z}_+$ with binary representation $N_\mathcal{S}=n_{\lambda}...n_0$ using digit set $\mathcal{S}$. The computation time of parallel scalar point multiplication $nP$ using $N_\mathcal{S}$ after calculating from $n_0$ up to $n_i$ is $T(N_\mathcal{S},i)$ which can be calculated by:\\\\
$T(N_\mathcal{S},i):=$
$$
\begin{cases}
0 & \textrm{if } i=0 \textrm{ and } n_i=0;\\
T(N_\mathcal{S},i-1) & \textrm{if } i>0 \textrm{ and } n_i=0;\\
iD+(|n_i|-1)A & \textrm{if } i\ge 0 \textrm{, } n_i \neq 0 \textrm{, and }\\
& n_j=0; \forall j, 0 \le j < i;\\
\max(T(N_\mathcal{S},i-1),iD)+|n_i|A & \textrm{otherwise.}
\end{cases}
$$
\end{definition}

From Definition 2, the time used to calculate $nP$ using $N_\mathcal{S}$ is $T(N_\mathcal{S},\lambda)$. Our problem is to find 
$N_\mathcal{S}$ which uses minimum 
$T(N_\mathcal{S},\lambda)$ for given 
$n,D$ and $A$. 
Here, we consider only 
$0<D \le A$ since when $D=0$, minimal total Hamming weight representation 
\cite{mthw} is optimal; and $D \le A$ for all elliptic curve implementations that have been proposed up to this state~\cite{web}.

We do not consider communication time, denoted as $S$, between processors in our model, and will consider them as future work. Because $S$ 
is usually large, 
one might think 
we should add that time into the time of double, i.e. 
$D' = D + S$. 
However, doing that is 
too pessimistic. By pipelining, we need 
$D + D + S = 2D + S$ 
to do 
two doubling, 
not $2D' = 2D + 2S$. It is straightforward to show that the time until the point $2^iP$ is no more than $i(D + S)$. Hence, by assigning $D$ to $D + S$, one can calculate an upper bound of the computation time in our model. We strongly believe that the optimal representation will not significantly change by adding $S$. 


Moreover, we can see that $T(N_\mathcal{S},i)$ is always in the form $pA+qD$ for some integers $p$ and $q$, so from now on we will normalize $A$ and $D$ by assuming that $D=1$.

\begin{example}
Let 
$D=1$, $A=3$, and $n = 29$. If we calculate $29P$ using $N_\mathcal{B} = 11101$, we have $T(N_\mathcal{B},\lambda)=11$. If we use $N_\mathcal{C}=1000\bar{1}\bar{1}$, we have $T(N_\mathcal{C},\lambda)=8$. In Section~\ref{proof_algo1}, we will show that this is optimal among all representations. 
\hfill $\square$
\end{example}

\section{Properties of Our Model}\label{properties}


\subsection{Comparing computation time with N\"{o}cker's Algorithm}\label{nocker}

N\"{o}cker \cite{nocker} proposed an algorithm to distribute workload to compute $nP$ among all processors. The computation time used by his algorithm is as follows.
\begin{theorem}[adapt from Theorem 1 in \cite{nocker}]
Let $p\ge 2$ be the number of processors used to compute $nP$, $N_\mathcal{B}=n_\lambda...n_0$, and $c=D / A$. For $0<D \le A$, the computation time used by N\"{o}cker's algorithm is
$$
T \le \lambda D+\left(\frac{c}{{(1+c)}^p-1}(\lambda+1)-1+\lceil\log_2 p\rceil\right)A.
$$
Furthermore, this bound is tight since there is an integer $n$ which achieves this computation time.
\end{theorem}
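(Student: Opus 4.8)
The plan is to reconstruct the schedule underlying N\"{o}cker's algorithm inside our time model, analyze its critical path, and then choose the workload split so that all processors finish simultaneously. I would first reduce to the worst case: since every nonzero digit of $N_\mathcal{B}$ forces one addition, the instance maximizing the running time is the all-ones integer $n=2^{\lambda+1}-1$, and it suffices to prove the bound for this $n$ and then exhibit it as the tight example.

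For this instance I would describe the algorithm as a pipelined doubling chain together with distributed additions. Partition the bit positions $0,\dots,\lambda$ into $p$ consecutive blocks, where block $j$ has length $\ell_j$ and starts at position $b_j=\sum_{k<j}\ell_k$. Processor $j$ receives $2^{b_j}P$ from processor $j-1$, advances the doubling chain through its block by front-loading its $\ell_j$ doublings so that the boundary point $2^{b_{j+1}}P$ is forwarded at time $b_{j+1}D$, and then accumulates the partial sum of its block using $\ell_j-1$ additions. The key observation, which uses the hypothesis $D\le A$, is that once a processor begins its additions every power it needs is already in hand, so its partial result is ready at time $b_{j+1}D+(\ell_j-1)A$ with no idle stalls; the $p$ partial sums carry their own weights, so they are merged with no repositioning by a balanced binary tree of depth $\lceil\log_2 p\rceil$.

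Next I would balance the block lengths. Setting the ready times $b_{j+1}D+(\ell_j-1)A$ equal for consecutive $j$ yields the recurrence $\ell_{j+1}(A+D)=\ell_j A$, i.e. $\ell_{j+1}=\ell_j/(1+c)$, so the lengths form a geometric sequence. Imposing $\sum_{j=0}^{p-1}\ell_j=\lambda+1$ and solving for the smallest (topmost) block gives $\ell_{p-1}=\dfrac{c(\lambda+1)}{(1+c)^p-1}$. The common ready time is then $\lambda D+(\ell_{p-1}-1)A$, since the topmost processor needs only $\ell_{p-1}-1$ doublings to reach $2^\lambda P$ at time $\lambda D$; adding the $\lceil\log_2 p\rceil$ additions of the merging tree produces exactly $\lambda D+\bigl(\tfrac{c}{(1+c)^p-1}(\lambda+1)-1+\lceil\log_2 p\rceil\bigr)A$.

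The main obstacle is integrality: the optimal $\ell_j$ are real, whereas block lengths must be integers, so the clean geometric split is only approximate and the balanced finish times cannot be matched exactly. I would handle this by rounding the $\ell_j$ and charging the rounding to the inequality, which is precisely why the statement is phrased as an upper bound; the tightness claim is then recovered by taking $n=2^{\lambda+1}-1$ with $\lambda$ chosen so that the geometric lengths come out integral, or by observing that the all-ones instance meets the bound up to this rounding. A secondary point to verify carefully is that front-loading the doublings keeps the pipeline feeding each processor exactly at time $b_jD$ even though every processor also performs additions, and that the assumption $D\le A$ is exactly what prevents those additions from starving on powers that are not yet available.
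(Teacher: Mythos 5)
First, a point of order: the paper does not prove this statement. Theorem~2 is imported as an ``adaptation'' of Theorem~1 from N\"{o}cker's cited work and is used only as a benchmark for the comparison in Theorem~3, so there is no in-paper proof to measure your reconstruction against. Taken as a from-scratch derivation of N\"{o}cker's bound, your skeleton is the right one and does recover the formula: partitioning the $\lambda+1$ bit positions into $p$ consecutive blocks, pipelining the doubling chain so that $2^{b_{j+1}}P$ is forwarded at time $b_{j+1}D$, and equalizing the finish times $b_{j+1}D+(\ell_j-1)A$ yields $\ell_{j+1}=\ell_j/(1+c)$, hence $\ell_{p-1}=c(\lambda+1)/\bigl((1+c)^p-1\bigr)$, and the $\lceil\log_2 p\rceil$-depth merge supplies the last term.

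The genuine gap is the step you flag and then wave through: integrality. The stated inequality has no slack --- it is exactly the idealized balanced finish time computed with the real-valued $\ell_{p-1}$ --- so ``rounding the $\ell_j$ and charging the rounding to the inequality'' is not an argument. With integer block lengths some block necessarily exceeds its ideal length, and its finish time exceeds the claimed bound unless you show the excess is absorbed somewhere (e.g.\ by the $-1$, or by the top block needing only $\ell_{p-1}-1$ doublings); a careful proof must either carry explicit ceilings or quantify the rounding loss. Relatedly, the tightness clause cannot be established as you propose: $c=D/A$ is an arbitrary real in $(0,1]$, so ``choosing $\lambda$ so that the geometric lengths come out integral'' is generally impossible, and the all-ones instance meets the bound only approximately. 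A smaller inaccuracy: the hypothesis $D\le A$ is not ``exactly what prevents starvation'' --- with front-loaded doublings every power a processor needs is in hand before its first addition regardless of the ratio; the hypothesis matters for the balance and optimality of the geometric split, not for feasibility of the schedule.
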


\noindent When $p=2$ and $D=1$, we have $T \le \lambda+\left(\frac{A}{2A+1}(\lambda+1)\right)A.$ We will show in Corollary 1 that, when $0<2D\le A$, the computation time from our model is no more than
$\frac{1}{2}\lambda A+A+1,$
and, in Corollary 2, we show that, when $0<D\le A<2D$, the computation time from our model is no more than $A+\lambda+1$. Using these facts, we will show in the next theorem that our representations give better computation time.

\begin{theorem}
Algorithm 2 and 3 generate representations which have better worst case computation time than N\"{o}cker's algorithm when $A\ge 1$ and $\lambda \ge \max\left(\frac{4}{3}A+\frac{4}{3},3+\frac{3}{A}\right).$
\end{theorem}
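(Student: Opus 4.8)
The plan is to compare, in each regime of $A$, the tight worst-case time guaranteed for N\"{o}cker's algorithm by Theorem~2 against the per-instance upper bounds that Corollary~1 and Corollary~2 give for the representations produced by Algorithm~2 and~3. First I would specialize Theorem~2 to the setting of this paper, namely $p=2$ and $D=1$. Using $c=1/A$ and $\lceil\log_2 2\rceil = 1$, the bracketed term collapses, leaving the worst-case time of N\"{o}cker's algorithm equal to $\lambda + \frac{A^2}{2A+1}(\lambda+1)$. Because Theorem~2 asserts this bound is tight, some integer $n$ actually attains it, so this value is precisely N\"{o}cker's worst case. On the other side, Corollary~1 and Corollary~2 bound the time of our representations for \emph{every} $n$, hence they upper-bound our worst case; it therefore suffices to show that the relevant corollary bound is strictly smaller than N\"{o}cker's tight value.

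I would then split on the value of $A$, since $D=1$ makes the hypotheses of the two corollaries read $A\ge 2$ and $1\le A<2$ respectively. In the case $A\ge 2$, Corollary~1 gives our bound $\tfrac{1}{2}\lambda A + A + 1$, and the desired inequality $\tfrac{1}{2}\lambda A + A + 1 < \lambda + \frac{A^2}{2A+1}(\lambda+1)$ clears denominators to a linear inequality in $\lambda$ whose $\lambda A^2$ terms cancel, reducing to $\lambda > \frac{2(A^2+3A+1)}{3A+2}$. In the case $1\le A<2$, Corollary~2 gives our bound $A+\lambda+1$, and the analogous inequality reduces after clearing denominators to $\lambda > 1 + \frac{3}{A} + \frac{1}{A^2}$.

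To close each case I would verify that the stated hypothesis on $\lambda$ dominates these thresholds. For $A\ge 2$ the hypothesis supplies $\lambda \ge \tfrac{4}{3}A + \tfrac{4}{3} = \tfrac{4}{3}(A+1)$, and cross-multiplying shows $\tfrac{4}{3}(A+1)(3A+2) - 2(A^2+3A+1) = 2A^2 + \tfrac{2}{3}A + \tfrac{2}{3} > 0$, so $\tfrac{4}{3}(A+1)$ strictly exceeds the threshold. For $1\le A<2$ the hypothesis supplies $\lambda \ge 3 + \tfrac{3}{A}$, and since $3 + \tfrac{3}{A} - \bigl(1 + \tfrac{3}{A} + \tfrac{1}{A^2}\bigr) = 2 - \tfrac{1}{A^2} \ge 1 > 0$ for $A\ge 1$, this bound also strictly exceeds the threshold. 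In both cases the strict gap carries over, giving our worst case strictly below N\"{o}cker's.

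I expect the only real work to be bookkeeping: confirming that the two hypothesis terms inside the $\max$ are matched to the correct regime (the $\tfrac{4}{3}(A+1)$ term governs $A\ge2$ via Corollary~1, the $3+\tfrac{3}{A}$ term governs $A<2$ via Corollary~2), and keeping the sign of each linear-in-$\lambda$ reduction straight after the $\lambda A^2$ cancellation. No step is conceptually hard; the main obstacle is simply ensuring the algebraic reductions and the direction of each inequality are exactly right, so that the advertised threshold on $\lambda$ is genuinely sufficient in each case.
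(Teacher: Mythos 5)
Your proposal is correct and follows essentially the same route as the paper: specialize N\"{o}cker's tight bound to $p=2$, $D=1$, and compare it against the Corollary~1 and Corollary~2 upper bounds in the regimes $A\ge 2$ and $1\le A<2$ respectively. The only difference is cosmetic --- you clear denominators and solve for the exact threshold on $\lambda$ (even securing strict inequality, which the paper glosses over), whereas the paper reaches the same conclusion by chaining lower bounds on $\frac{A}{2A+1}$.
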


\begin{proof}
Since $A\ge 1$ and $\lambda \ge \frac{4}{3}A+\frac{4}{3}$, we have
\begin{align*}
&\lambda + \left(\frac{A}{2A+1}(\lambda+1)\right)A \ge \lambda+\left(\frac{1}{2}-\frac{1}{4A+2}\right)\lambda A \\
&\ge \frac{1}{2}\lambda A + \lambda - \left(\frac{A}{4A+2}\right) \lambda
                                                 \ge \frac{1}{2}\lambda A + \frac{3}{4}\lambda 
                                                \ge \frac{1}{2}\lambda A + A + 1,
\end{align*}
and since $A\ge 1$ and $ \lambda \ge 3+\frac{3}{A}$, we have
$$\lambda+\left(\frac{A}{2A+1}(\lambda+1)\right)A \ge \lambda+\frac{1}{3}\lambda A \ge A + \lambda +1.$$
Hence, it is proved that the upper bound of N\"{o}cker's algorithm is larger than our algorithms. 
\end{proof}

\noindent We note that $\lambda$ is usually larger than $\max\left(\frac{4}{3}A+\frac{4}{3},3+\frac{3}{A}\right)$. 
$A$ is usually less than 10 and $\lambda$ is usually more than 100.

\subsection{Optimality Proof for Digit Set $\{\bar{1}, 0, 1\}$}\label{digit_set}

Before analyzing the time model further, we have a proposition that using digit set $\mathcal{S}=\mathcal{C}=\{\bar{1},0,1\}$ is sufficient.

\begin{proposition}
If $0 < D \le A$, binary representation of $n$ using digit set $\mathcal{S} = \mathcal{C}=\{\bar{1},0,1\}$ has no larger $T(N_\mathcal{S},\lambda)$ than all other $\mathcal{S}' \supseteq \{\bar{1},0,1\}$.
\end{proposition}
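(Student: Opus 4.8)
The plan is to prove the nontrivial inequality $\min_{N_\mathcal{C}} T(N_\mathcal{C},\lambda) \le \min_{N_{\mathcal{S}'}} T(N_{\mathcal{S}'},\lambda)$; the reverse inequality is immediate, since $\mathcal{C}\subseteq\mathcal{S}'$ makes every $\mathcal{C}$-representation an $\mathcal{S}'$-representation, so the two optima then coincide. To this end I would fix an optimal representation $N_{\mathcal{S}'}$ and transform it, one digit at a time, into a representation over $\{\bar{1},0,1\}$ whose computation time is no larger.

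First I would record a convenient closed form for the computation time. Unfolding the max-plus recurrence of Definition 2 over the positions $i_1<\dots<i_k$ that carry the nonzero digits gives
$$T(N_\mathcal{S},\lambda)=\max\Big(i_1 D+(W-1)A,\ \max_{2\le m\le k}\big(i_m D+S(i_m)A\big)\Big),$$
where $S(j):=\sum_{l\ge j}|n_l|$ and $W:=S(0)$ is the total weight; the lone $-1$ records that the least significant nonzero digit is copied rather than added. I expect to prove this by induction on $k$ straight from Definition 2.

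The core is a weight-reducing carry move: whenever a digit satisfies $|n_i|\ge 2$, replace $n_i$ by $n_i\mp 2$ and $n_{i+1}$ by $n_{i+1}\pm 1$ (upper signs for $n_i\ge2$, lower for $n_i\le-2$). This preserves $\sum_i n_i 2^i=n$ and strictly decreases $W$ by at least one, since $|n_i|$ drops by $2$ while $|n_{i+1}|$ moves by at most $1$; hence iterating terminates at a representation all of whose digits lie in $\{\bar{1},0,1\}$, i.e. in $\mathscr{N}_\mathcal{C}$. It remains to show the move never increases $T$. Writing $\delta:=|n_{i+1}\pm1|-|n_{i+1}|\in\{-1,+1\}$, every $S(j)$ with $j\le i$ changes by $-2+\delta\le -1$ and every $S(j)$ with $j\ge i+2$ is unchanged, so in the closed form the only term that can grow is the one at position $i+1$, and only when $\delta=+1$. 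I would then absorb that growth into the old term at position $i$: because $0<D\le A$ and $|n_i|\ge2$, a short computation gives $\mathrm{term}(i)-\mathrm{term}(i+1)\ge A$, which exactly offsets the increase so that the new value at $i+1$ does not exceed the old $T$.

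The step I expect to be the main obstacle is precisely this last comparison, because it interacts with the $-1$ correction carried by the least significant nonzero digit. When $n_i=2$ is itself that least significant digit, it becomes zero and the correction migrates up to position $i+1$, so the new term at $i+1$ inherits the $-1$ and stays put rather than needing to be offset; the bookkeeping of which position currently owns the correction must therefore be tracked through each sub-case (whether $n_{i+1}$ is nonnegative or negative, and whether position $i$ survives the move). Verifying these sub-cases carefully, together with the symmetric negative move, is where the real work lies; once the move is shown to be non-increasing, the proposition follows by applying it until no digit of magnitude at least two remains.
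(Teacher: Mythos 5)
Your proposal is correct and takes essentially the same route as the paper: both arguments repeatedly apply a value-preserving, weight-reducing carry from a digit with $|n_i|\ge 2$ into position $i+1$, and both hinge on the same observation that the at least $(|n_i|-1)A \ge A \ge D$ saved at position $i$ absorbs any increase incurred at position $i+1$. The differences are only presentational --- the paper reduces $|n_i|$ to at most one in a single step and reasons through the recurrence plus a monotonicity lemma, while you carry two units at a time and reason through an explicit max-plus closed form --- and the sub-cases you flag (the least significant nonzero digit and the sign of $n_{i+1}$) do all check out.
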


\begin{proof}
We prove this 
by contradiction. Suppose there is 
$N_\mathcal{S} = n_{\lambda}...n_0$ with some $|n_i|>1$ that has smallest $T(N_\mathcal{S},\lambda)$. Define $t:=T(N_\mathcal{S},i-1)$ (define $t:=0$ if $i=0$) and consider $n_{i+1}n_i$.

\textit{Case $n_i \equiv 1$ (mod 2):} We can construct new 
$N'_\mathcal{S}=n'_\lambda...n'_0$ with $n'_k=n_k$ for all $0 \le k \le \lambda$ except $n'_i=\textnormal{sgn}(n_i)$ and $ n'_{i+1} = n_{i+1}+\frac{n_i-\textnormal{sgn}(n_i)}{2} \le |n_{i+1}|+\frac{|n_i|-1}{2}.$ In the case where $n_i$ is not the least significant non-zero digit of $N_\mathcal{S}$, we have $T(N_\mathcal{S},i) = \max(t,iD)+|n_i|A \ge (i+1)D.$ \linebreak
By that, $T(N_\mathcal{S},i+1) = \max(T(N_\mathcal{S},i),(i+1)D) +|n_{i+1}|A \linebreak = \max(t,iD)+|n_i|A+|n_{i+1}|A.$
By the previous equation and the fact that $T(N'_\mathcal{S},i) = \max(t,iD)+A \ge (i+1)D$, \linebreak 
we have $T(N'_\mathcal{S},i+1) \le \max(T(N'_\mathcal{S},i),(i+1)D)+\linebreak \left(|n_{i+1}|+\frac{|n_i|-1}{2}\right)A \le \max(t,iD)+A+\left(|n_{i+1}|+\frac{|n_i|-1}{2}\right)A \linebreak \le T(N_\mathcal{S},i+1).$
We can use the similar argument to prove that $T(N'_\mathcal{S},i+1) \le T(N_\mathcal{S},i+1)$ for the case where $n_i$ is the least significant non-zero digit of $N_\mathcal{S}$. Because $n_k' = n_k$ for all $k > i + 1$, we have $T(N'_\mathcal{S},\lambda) \le T(N_\mathcal{S},\lambda)$.

\textit{Case $n_i \equiv 0$ (mod 2):} We can construct new 
$N'_\mathcal{S}=n'_\lambda...n'_0$ with $n'_k=n_k$ for all $0 \le k \le \lambda$ except $n'_i=0$ and $ n'_{i+1} = n_{i+1}+\frac{n_i}{2} \le |n_{i+1}|+\frac{|n_i|}{2}.$ Then, we can use the similar argument as in the case when $n_i \equiv 1$ (mod $2$) to show that $T(N'_\mathcal{S},\lambda) \le T(N_\mathcal{S},\lambda)$.

We can repeat changing $n_i$ where $|n_i|>1$ using the above method to get $N'_\mathcal{S}$ using only $\{\bar{1},0,1\}$ with no more $T(N_\mathcal{S},\lambda)$. 
This means that using $\{\bar{1},0,1\}$ is sufficient.
\end{proof}

From Proposition 1, we can assume 
that 
the representation we will consider for optimal representation is $N_\mathcal{C}$. 

\subsection{Delay and Optimal Representation}\label{delay}

We introduce a concept of delay when comparing time of 
processors (finished time of addition processor minus finished time of doubling processor at the same step) as follows:

\begin{definition}[Delay of addition processor in parallel scalar point multiplication]
Let $D=1$ be the time processor used for one doubling, $A\ge D$ be the time used for one addition, and $n \in \mathbb{Z}_+$ with 
$N_\mathcal{C} = n_\lambda...n_0$. The delay of addition processor after calculating $nP$ using $N_\mathcal{C}$ from $n_0$ up to $n_i$ is $\delta(N_\mathcal{C},i)$ which can be calculated by:\\\\
$\delta(N_\mathcal{C},i):=T(N_\mathcal{C},i)-iD=T(N_\mathcal{C},i)-i=$
$$
\begin{cases}
0 & \textrm{if } i=0;\\
\delta(N_\mathcal{C},i-1)-1 & \textrm{if } i>0 \textrm{ and } n_i=0;\\
0 & \textrm{if } i>0 \textrm{, } n_i \neq 0 \textrm{, and }\\
& n_j=0; \forall j, 0 \le j < i;\\
\max(\delta(N_\mathcal{C},i-1)+(A-1),A) & \textrm{otherwise.}
\end{cases}
$$
\end{definition}

The delay after calculating $nP$ using $N_\mathcal{C}$ is $\delta(N_\mathcal{C},\lambda)$. To calculate the delay, we consider each digit from $n_0$ to $n_{\lambda}$. The delay at $n_0$ is 0. When $n_i=0$, only doubling processor does its work, so the delay decreases by $D=1$. When we consider the least significant non-zero digit, the delay is 0 as we copy a result. 
And, when we consider other non-zero digits $n_i$, both processors do their works, so the delay increases by $A-D=A-1$. But, if 
$\delta(N_\mathcal{C},i-1)<D$, doubling processor will finish calculating $2^iP$ after addition processor finishes calculating up to $n_{i-1}$. This means addition processor needs to wait for $2^iP$, and after the addition, the delay is 
$A$.

\begin{example}
Let 
$D=1$, $A=3$, and $n = 29$. If we calculate $29P$ using $N_\mathcal{B} = 011101$ 
(`0' is added to compare with $N_\mathcal{C}$), we have $\delta(N_\mathcal{B},\lambda)=6$. If we use $N_\mathcal{C}=1000\bar{1}\bar{1}$, we have $\delta(N_\mathcal{C},\lambda)=3$. In Section~\ref{proof_algo1}, we will show that this is optimal among all representations.\hfill $\square$
\end{example}

From Definition 3, 
$\delta(N_\mathcal{C},\lambda)+\lambda=T(N_\mathcal{C},\lambda).$ Hence, 
$N^*_\mathcal{C}$ has smallest $T(N_\mathcal{C},\lambda)$ among all $N_\mathcal{C} \in \mathscr{N}_\mathcal{C}$ if and only if it has smallest $\delta(N_\mathcal{C},\lambda)$ among all $N_\mathcal{C} \in \mathscr{N}_\mathcal{C}$ (
using 
same $\lambda$).

\begin{definition}[Optimal canonical binary representation of $n$ for parallel scalar point multiplication]
$N^*_\mathcal{C}$ is an optimal canonical binary representation of $n$ for parallel scalar point multiplication with addition time $A$ if for all $N_\mathcal{C} \in \mathscr{N}_\mathcal{C}$,
$\delta(N^*_\mathcal{C},\lambda) \le \delta(N_\mathcal{C},\lambda).$
We use $\mathscr{N}^*_\mathcal{C}$ to denote set of all $N^*_\mathcal{C}$.
\end{definition}

\section{Optimal Representation when $0 < 2D \le A$}\label{optimal1}

\subsection{Algorithm}\label{algo1}

In the case where $A \ge 2$ ($D=1$), we can construct \linebreak $N^*_\mathcal{C} \in \mathscr{N}^*_\mathcal{C}$ from $N_\mathcal{B}$ using Algorithm 2.

\begin{algorithm}[ht]
\DontPrintSemicolon
\caption{Changing binary representation to optimal representation when $0 < 2D \le A$}
\SetKwInOut{Input}{input}
\SetKwInOut{Output}{output}
\Input{$N_\mathcal{B} = n_{\lambda} ...n_0$}
\Output{$N^*_\mathcal{C} = n'_{\lambda+1}...n'_0 \in \mathscr{N}^*_\mathcal{C}$}
\Begin{
	$\ell \leftarrow$ index of the least significant `1' of $N_\mathcal{B}$\;
    \If {$N_\mathcal{B}$ ends with $11(01)^*010^*$} {
    	$n_\ell \leftarrow \bar{1}$, $n_{\ell+1} \leftarrow 1$\;
    	\Return toNAF($N_\mathcal{B},\ell+1$)\;
    }
    \ElseIf {$N_\mathcal{B}$ ends with $0(01)^*0110^*$} {
    	\Return toNAF($N_\mathcal{B},\ell+1$)
    }
    \lElse {\Return toNAF($N_\mathcal{B},\ell$)}
}
\end{algorithm}


\begin{example}
Consider $n=29$ with $N_\mathcal{B} = 11101$. From Algorithm 2, $\ell=0$ and $N_\mathcal{B}$ ends with $11(01)^*010^*$. We change $N_\mathcal{B}$ to $1111\bar{1}$ and transform to NAF not considering $\bar{1}$ using Algorithm 1. We have $N^*_\mathcal{C} = 1000\bar{1}\bar{1}$ which has smallest scalar point multiplication time for any $A \ge 2D$. \hfill $\square$
\end{example}

We can see that Algorithm 2 has $O(\log_2 n)$ complexity and uses $O(1)$ additional space. Note that Algorithm 2 generates optimal representations not depends on the value $A$.

\subsection{Optimality Proof for Algorithm 2}\label{proof_algo1}


\begin{lemma}
Let $D=1$, $A\ge 1$, $N_\mathcal{C}=n_\lambda...n_0$, and $N'_\mathcal{C}=n'_\lambda...n'_0$. If $n_k=n'_k$ for all $0<i \le k \le j$ with some $n_p \neq 0$, $n'_q \neq 0$ for some $0 \le p,q < i$, and $\delta(N_\mathcal{C},i-1) \ge \delta(N'_\mathcal{C},i-1)$, then $\delta(N_\mathcal{C},j) \ge \delta(N'_\mathcal{C},j)$.
\end{lemma}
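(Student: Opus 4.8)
The plan is to prove the inequality by induction on the index, propagating the given relation $\delta(N_\mathcal{C},i-1) \ge \delta(N'_\mathcal{C},i-1)$ upward one digit at a time until reaching index $j$. The crucial structural fact is that, because both $N_\mathcal{C}$ and $N'_\mathcal{C}$ possess a non-zero digit strictly below index $i$ (namely $n_p$ and $n'_q$ with $p,q<i$), neither representation can ever enter the ``least significant non-zero digit'' case (the third branch of Definition 3) at any index $k$ with $i \le k \le j$. Hence only the second branch (for $n_k=0$) and the fourth branch (for $n_k \neq 0$) are in play on this range, and since the digits coincide on $[i,j]$, the \emph{same} branch applies to both representations at each such index.

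First I would establish the base case, which is simply the hypothesis $\delta(N_\mathcal{C},i-1) \ge \delta(N'_\mathcal{C},i-1)$. For the inductive step, suppose $\delta(N_\mathcal{C},k-1) \ge \delta(N'_\mathcal{C},k-1)$ for some $i \le k \le j$. If $n_k = n'_k = 0$, then $\delta(N_\mathcal{C},k) = \delta(N_\mathcal{C},k-1)-1$ and $\delta(N'_\mathcal{C},k) = \delta(N'_\mathcal{C},k-1)-1$, so subtracting $1$ from both sides of the inductive hypothesis preserves the inequality. If $n_k = n'_k \neq 0$, then both delays update via $\delta(\cdot,k)=\max(\delta(\cdot,k-1)+(A-1),A)$; since the map $x \mapsto \max(x+(A-1),A)$ is non-decreasing in $x$, the inductive hypothesis again yields $\delta(N_\mathcal{C},k) \ge \delta(N'_\mathcal{C},k)$.

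Iterating this from $k=i$ up to $k=j$ then gives $\delta(N_\mathcal{C},j) \ge \delta(N'_\mathcal{C},j)$, as claimed. The heart of the argument is simply that \emph{both} update rules appearing in Definition 3 on this range, namely $x \mapsto x-1$ and $x \mapsto \max(x+(A-1),A)$, are monotone non-decreasing functions of the previous delay, and the equality of digits guarantees the identical rule is applied to each representation at every step.

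The main obstacle --- really the only subtlety --- is justifying that the third branch of the recurrence is never triggered on the interval $[i,j]$, since that branch resets the delay to $0$ regardless of the previous value and would destroy monotonicity. This is precisely what the hypotheses $n_p \neq 0$ and $n'_q \neq 0$ for $p,q<i$ are for: at every index $k \ge i$ there is a non-zero digit strictly below $k$, so the ``all lower digits zero'' condition of the third branch fails for both representations. Once this case is ruled out, the remainder is a routine induction on the two monotone maps.
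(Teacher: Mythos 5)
Your proof is correct and follows essentially the same route as the paper's: an induction from $i$ to $j$ using the monotonicity of the two update rules $x \mapsto x-1$ and $x \mapsto \max(x+(A-1),A)$, with the hypotheses $n_p \neq 0$, $n'_q \neq 0$ ruling out the ``least significant non-zero digit'' branch. Your explicit justification of why that third branch never fires is a point the paper states only in passing, but the argument is otherwise the same.
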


\begin{proof}
We prove this lemma by induction on $k$ from $i$ to $j$. Assume that $n_k=n'_k$ and $\delta(N_\mathcal{C},k-1) \ge \delta(N'_\mathcal{C},k-1)$. If $n_k=n'_k=0$, then $
\delta(N'_\mathcal{C},k) = \delta(N'_\mathcal{C},k-1)-1 \le \delta(N_\mathcal{C},k-1)\linebreak -1 = \delta(N_\mathcal{C},k).$ If $n_k=n'_k=1$ or $\bar{1}$, because both are not the least significant non-zero digits, then $\delta(N'_\mathcal{C},k) = \max(\delta(N'_\mathcal{C},k-1)+(A-1),A) \le \max(\delta(N_\mathcal{C},k-1)+(A-1)\linebreak ,A) = \delta(N_\mathcal{C},k).$
\end{proof}

\begin{lemma}
Let $D=1$, $A \ge 2$, $k \ge 1$, $N_\mathcal{C}=n_\lambda...n_0$, and $N'_\mathcal{C}=n'_\lambda...n'_0$. If $n_{i+k}...n_i=01^k$, $n'_{i+k}...n'_i=10^k$, and $\delta(N_\mathcal{C},i-1) \ge \delta(N'_\mathcal{C},i-1) \ge 2$, then $\delta(N_\mathcal{C},i+k) \ge \delta(N'_\mathcal{C},i+k) \ge 2.$
\end{lemma}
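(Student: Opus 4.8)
The plan is to evaluate both delays explicitly by unrolling the recurrence in Definition~3 across positions $i$ through $i+k$, and then compare the two closed forms. The first thing I would establish is that both $N_\mathcal{C}$ and $N'_\mathcal{C}$ have already passed their least significant non-zero digit before position $i$: since $\delta(N_\mathcal{C},i-1)\ge\delta(N'_\mathcal{C},i-1)\ge 2>0$, the addition processor is strictly ahead of the doubling processor at step $i-1$, which is impossible unless some addition has already occurred, i.e.\ unless a non-zero digit sits below position $i$. This justifies using the ``otherwise'' branch (not the least-significant-digit branch) whenever a non-zero digit is encountered in $[i,i+k]$.

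Writing $d:=\delta(N_\mathcal{C},i-1)$ and $d':=\delta(N'_\mathcal{C},i-1)$, I would handle $N'_\mathcal{C}$ first. Its digits $n'_{i+k-1}\dots n'_i$ are the $k$ zeros of $10^k$, so the zero-branch applies $k$ times and the delay drops to $d'-k$ at position $i+k-1$. The leading $1$ at position $i+k$ then invokes the ``otherwise'' branch, giving $\delta(N'_\mathcal{C},i+k)=\max\bigl(d'-k+(A-1),\,A\bigr)$. For $N_\mathcal{C}$, the digits $n_{i+k-1}\dots n_i$ are the $k$ ones of $01^k$; a short induction shows that the ``otherwise'' branch selects its additive argument at each of these positions (because $d\ge 2$ keeps the running delay above $1$), so the delay climbs to $d+k(A-1)$ at position $i+k-1$, and the leading $0$ at position $i+k$ drops it by one to $\delta(N_\mathcal{C},i+k)=d+k(A-1)-1$.

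With both values in hand, the two conclusions are short. The bound $\delta(N'_\mathcal{C},i+k)\ge 2$ is immediate, since the expression is a maximum whose second argument $A$ is already at least $2$. For the comparison $\delta(N_\mathcal{C},i+k)\ge\delta(N'_\mathcal{C},i+k)$, I would check the inequality $d+k(A-1)-1\ge\max(d'-k+(A-1),\,A)$ against each argument of the maximum: against $d'-k+(A-1)$ it reduces to $d+(k-1)A\ge d'$, which follows from $d\ge d'$ and $k\ge 1,\,A\ge 2$; against $A$ it reduces to $d+(k-1)A\ge k+1$, which follows from $d\ge 2,\,A\ge 2,\,k\ge 1$ (the case $k=1$ using $d\ge 2$ directly, and $k\ge 2$ using $(k-1)A\ge 2k-2$).

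I do not expect a genuine obstacle here; the content is a deterministic unrolling of a piecewise recurrence. The only points demanding care are bookkeeping ones: confirming at the outset that we are past the least significant non-zero digit so that the correct branch is used, and verifying that the maximum in the ``otherwise'' case is always attained by its additive argument along the run of ones (so that the clean formula $d+k(A-1)$ is valid), together with dispatching the small case $k=1$ separately in the final inequality.
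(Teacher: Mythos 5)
Your proposal is correct and follows essentially the same route as the paper: both derive the closed forms $\delta(N_\mathcal{C},i+k)=d+k(A-1)-1$ and $\delta(N'_\mathcal{C},i+k)=\max\bigl(d'-k+(A-1),A\bigr)$ and then compare them, the only cosmetic difference being that the paper splits on which argument of the max is active while you check the inequality against each argument directly. You are in fact slightly more careful than the paper in justifying that the ``otherwise'' branch always selects its additive argument along the run of ones and in noting why neither relevant digit can be the least significant non-zero digit.
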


\begin{proof}
Define $d_{i-1}:=\delta(N_\mathcal{C},i-1), d'_{i-1}:=\delta(N'_\mathcal{C},i-1)$, $ d_{i+k}:=\delta(N_\mathcal{C},i+k)$, and $d'_{i+k}:=\delta(N'_\mathcal{C},i+k)$. Because $d_{i-1}\ge d'_{i-1} \ge 2$, we know that $n_i$ and $n'_{i+k}$ are not the least significant non-zero digits. From Definition 3, we have $d_{i+k} = d_{i-1}+k(A-1)-1$ and $d'_{i+k} = \max(d'_{i-1}-k+(A-1),A).$

\textit{Case $d'_{i-1}-k < 1:$} Because $d_{i-1} \ge 2$, we have $d_{i-1}-1 \ge 1$ and $2 \le d'_{i+k} = A \le (d_{i-1}-1)+(A-1) \le d_{i-1}-1+k(A-1) = d_{i+k}.$

\textit{Case $d'_{i-1}-k \ge 1:$} We have $2 \le d'_{i+k} = d'_{i-1}-k+(A-1) \linebreak \le d_{i-1}-k+(A-1) \le d_{i-1}-1+k(A-1) = d_{i+k}. $
\end{proof}

\begin{lemma}
If $A \ge 2$, considering from the second least significant non-zero digit, NAF representation has smallest delay among all canonical binary representations.
\end{lemma}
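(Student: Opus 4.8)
The plan is to turn the comparison into an exchange argument that rewrites an arbitrary canonical representation into NAF without ever increasing the delay above the second least significant non-zero digit. By Proposition 1 it suffices to compare the NAF representation against canonical representations $N_\mathcal{C} \in \mathscr{N}_\mathcal{C}$. First I would pin down why ``from the second least significant non-zero digit'' is the right anchor: at the least significant non-zero digit the delay is $0$ by Definition 3, and at the next non-zero digit the incoming delay is at most $0$, so the floor in $\max(\delta(N_\mathcal{C},i-1)+(A-1),A)$ is active and the delay equals exactly $A$ for \emph{every} canonical representation. Thus, using $A \ge 2$, all representations are synchronized at a common delay $A \ge 2$ at this anchor, and it remains only to show that NAF minimizes the delay accumulated from this point up to position $\lambda$.

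Next I would run the exchange step. Any canonical representation that is not NAF contains two adjacent non-zero digits; I would take the lowest block of consecutive non-zeros and rewrite it toward NAF using the carry identity $01^p = 10^{p-1}\bar{1}$ (and its sign-symmetric form), which is exactly the rewriting performed by toNAF. Since $|\bar{1}| = 1$, the $\bar{1}$ correction contributes to the delay precisely as a $1$ would, so for the purpose of Definition 3 the NAF block behaves like $10^{p-1}1$, and the local comparison reduces to the setting of Lemma 2, namely $01^k$ versus $10^k$. Because the delay entering the block is at least $A \ge 2$, the hypothesis $\delta(N_\mathcal{C},i-1) \ge \delta(N'_\mathcal{C},i-1) \ge 2$ of Lemma 2 is met, and its conclusion gives both that the delay does not increase across the block and that it remains $\ge 2$, so the hypotheses are preserved for the next block to be treated.

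To finish, I would propagate the local improvement through the unchanged higher-order digits using Lemma 1. After one rewrite we have the original $N_\mathcal{C}$ and the more NAF-like $N'_\mathcal{C}$ agreeing on all digits above the modified block, with $\delta(N_\mathcal{C},\cdot) \ge \delta(N'_\mathcal{C},\cdot)$ at the top of that block and a lower non-zero digit present in both; Lemma 1 then carries the inequality ``original $\ge$ rewritten'' through the identical higher digits up to position $\lambda$, giving $\delta(N_\mathcal{C},\lambda) \ge \delta(N'_\mathcal{C},\lambda)$. Iterating the rewrite removes all adjacencies and terminates at the unique canonical representation with no two adjacent non-zeros, i.e. the NAF representation, so $\delta(N^{\mathrm{NAF}}_\mathcal{C},\lambda) \le \delta(N_\mathcal{C},\lambda)$ for every $N_\mathcal{C} \in \mathscr{N}_\mathcal{C}$.

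The hard part will be the carry interaction: when the run of ones abuts a higher non-zero digit, the rewrite $01^p \to 10^{p-1}\bar{1}$ pushes a carry into that digit, so the ``identical higher digits'' premise of Lemma 1 can fail and the carry must be resolved iteratively while keeping both delays $\ge 2$. Combined with the $\max(\cdot,A)$ floor, this forces a case analysis on whether the floor is active or inactive at each step, which is exactly the bookkeeping that Lemmas 1 and 2 are engineered to absorb. A secondary subtlety is justifying the anchor when two representations place their second non-zero digit at different positions; here it is the uniform \emph{value} $A$ at the anchor, rather than any positional alignment, that lets the comparison start cleanly.
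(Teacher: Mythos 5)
Your overall strategy is the same as the paper's: a local exchange argument that rewrites away adjacencies, with Lemma~2 doing the run comparison ($01^{p}$ versus $10^{p-1}\bar{1}$, peeling off the common bottom digit since $|\bar{1}|=|1|$) and Lemma~1 propagating the inequality through the unchanged higher digits. That part of your plan is sound and matches the paper's Case $01^k$ (and its negation). However, there is a genuine gap in case coverage: the lemma quantifies over \emph{all} canonical representations over $\{\bar{1},0,1\}$, and such a representation can contain adjacent non-zero digits of \emph{opposite} sign, e.g. $1\bar{1}$ or $\bar{1}1$ (consider $1\bar{1}0\bar{1}$). Your only rewrite rules are the carry identity $01^{p}\to 10^{p-1}\bar{1}$ and its sign-symmetric form, and neither applies to a mixed-sign pair; your iteration simply gets stuck on such inputs, so it does not terminate at NAF. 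The paper spends half of its case analysis on exactly this: the borrow identities $1\bar{1}\to 01$ and $\bar{1}1\to 0\bar{1}$, each with a direct one-step delay comparison from Definition~3 (the rewritten side loses the extra $+(A-1)$ and gains a $-1$, so its delay is strictly smaller), followed by Lemma~1. You need these two extra rules and their analysis for the proof to go through.

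A secondary point: the ``hard part'' you flag --- a carry from $01^{p}\to 10^{p-1}\bar{1}$ propagating into a higher non-zero digit --- largely evaporates once the mixed-sign rules are in place. A maximal same-sign run of length $\ge 2$ is necessarily capped either by a $0$ (so the block is literally $01^{k}$ and the carry is absorbed in place, which is why the paper writes the leading zero into the pattern) or by an opposite-sign digit, in which case you first eliminate the resulting $1\bar{1}$ or $\bar{1}1$ pair with the borrow rule. So the real missing ingredient is the mixed-sign case, not carry bookkeeping. Your anchor observation (every canonical representation has delay exactly $A$ at its second least significant non-zero digit) is correct but not needed: since each rewrite leaves all lower digits untouched, the two representations being compared automatically have equal delay entering the modified block, which is all that Lemmas~1 and~2 require.
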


\begin{proof}
We prove this lemma by contradiction. Suppose there is canonical binary representation $N_\mathcal{C} = n_\lambda...n_0$ with consecutive non-zero digits (not consider the least significant non-zero digit) that has smallest delay. Consider the consecutive non-zero digits in four following cases with $k \ge 2$.

\textit{Case $n_{i+k}...n_i=01^k:$} Consider $N'_\mathcal{C}$ with $n'_j=n_j$ for all $0\le j \le \lambda$ except $n'_{i+k}...n'_i=10^{k-1}\bar{1}$. Because $\delta(N_\mathcal{C},i-1)=\delta(N'_\mathcal{C},i-1)$, then $\delta(N_\mathcal{C},i)=\delta(N'_\mathcal{C},i) \ge A \ge 2$ and by Lemma 2, we can conclude that $\delta(N_\mathcal{C},i+k) \ge \delta(N'_\mathcal{C},i+k) \ge 2$. Since $n'_\lambda...n'_{i+k+1}=n_\lambda...n_{i+k+1}$, by Lemma 1, we get $\delta(N_\mathcal{C},\lambda) \ge \delta(N'_\mathcal{C},\lambda)$. 

\textit{Case $n_{i+k}...n_i=0\bar{1}^k:$} The proof is similar to case $01^k$.

\textit{Case $n_{i+1}n_i=1\bar{1}:$} Consider $N'_\mathcal{C}$ with $n'_j=n_j$ for all $0\le j \le \lambda$ except $n'_{i+1}n'_i=01$. Because $\delta(N_\mathcal{C},i-1)=\delta(N'_\mathcal{C},i-1)$, then $\delta(N_\mathcal{C},i+1) =\max(\delta(N_\mathcal{C},i-1)+(A-1),A)+(A-1)$ and $\delta(N'_\mathcal{C},i+1) =\max(\delta(N'_\mathcal{C},i-1)+(A-1),A)-1 \le \delta(N_\mathcal{C},i+1).$ Since $n'_\lambda...n'_{i+2}=n_\lambda...n_{i+2}$, by Lemma 1, we get $\delta(N_\mathcal{C},\lambda) \ge \delta(N'_\mathcal{C},\lambda)$. 

\textit{Case $n_{i+1}n_{i}=\bar{1}1:$} The proof is similar to case $1\bar{1}$.
\end{proof}

Lemmas 1-3 show that NAF 
has the smallest delay 
when considering from the second least significant non-zero digit. However, we can choose where that 
non-zero digit will be from two options: 
$010^*$ ending 
could be changed to $1\bar{1}0^*$, and 
$01^p110^*$ ending ($p \ge 0$) 
could be changed to $10^p0\bar{1}0^*$. This change 
may decrease the delay. We prove this in Theorem 4.

\begin{theorem}[Optimal representation when $0 < 2D \le A$]
Algorithm 2, using the following rules, produces 
$N^*_\mathcal{C} \in \mathscr{N}^*_\mathcal{C}$ for $0 < 2D \le A$.
\begin{itemize}
\item If $N_\mathcal{B}$ ends with $11(01)^*010^*$, change ending $01$ to $1\bar{1}$ and change this to NAF starting at \textnormal{`}$1$\textnormal{'} in this $1\bar{1}$ (the second least significant non-zero digit's position is changed).
\item If $N_\mathcal{B}$ ends with $0(01)^*0110^*$, change this to NAF starting at the second least significant \textnormal{`}$1$\textnormal{'} (the second least significant non-zero digit's position is not changed).
\item Otherwise, change $N_\mathcal{B}$ 
to NAF starting at the least significant \textnormal{`}$1$\textnormal{'} (the second least significant non-zero digit's position is changed if the ending is $11(01)^*0110^*$ or $1110^*$, and is not changed if the ending is $0(01)^*010^*$).
\end{itemize}
\end{theorem}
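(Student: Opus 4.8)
The plan is to use Lemmas 1--3 to collapse almost the entire search space, leaving only a bounded choice at the low-order end, and then to settle that choice by a direct evaluation of the delay $\delta$ from Definition 3. First I would argue that Lemmas 1--3 reduce the problem to the bottom digits: Lemma 3 already shows that, from the second least significant non-zero digit upward, NAF minimizes the delay, and once the position of the least significant non-zero digit is fixed, Algorithm 1 forces all digits above it. Hence two representations that are NAF from the second least significant non-zero digit can differ only in how the bottom is arranged, and the only two value-preserving local rewritings that relocate the second least significant non-zero digit are the ones noted before the theorem, $010^* \to 1\bar10^*$ (moving it down to sit adjacent to the least significant digit) and $01^p110^* \to 10^p0\bar10^*$ (moving it up). So it suffices to compare, for each possible ending of $N_\mathcal{B}$, the at most two candidate representations obtained by applying such a move or not, and to show Algorithm 2 selects the one of smaller delay.

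Next I would set up the comparison quantitatively. Writing the non-zero positions of a candidate as $p_0 < p_1 < \dots < p_r$, Definition 3 gives $\delta=0$ at $p_0$, and because the digits below $p_1$ keep $\delta(p_1-1)\le 0$, one always gets $\delta(p_1)=A$ at the second least significant non-zero digit, regardless of where $p_0$ lies. Above that point the recurrence $\delta(p_{j+1})=\max\bigl(\delta(p_j)+A-(p_{j+1}-p_j),\,A\bigr)$ governs everything, and $T=\delta(p_r)+p_r$. The effect of the two rewritings is visible here: each one converts the bottom gap pattern (e.g.\ $2,2$) into a different one (e.g.\ $1,3$) or shifts the top position $p_r$ by one, so every case reduces to comparing $\delta(p_r)+p_r$ for the two candidates. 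Lemmas 1 and 2 let me propagate any delay inequality established at the low end all the way up to $p_r=\lambda$, so only the local effect has to be checked in each case.

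Then I would carry out the case analysis matching the three branches of Algorithm 2. For an ending $11(01)^*010^*$, the move $01\to1\bar1$ turns the gaps $(2,2,\dots)$ into $(1,3,\dots)$ and lowers the top delay from $2A-2$ to $\max(2A-3,A)$, so applying the move wins (Rule A). For $0(01)^*0110^*$, keeping the second least significant non-zero digit in place keeps $p_r$ one smaller while the delay stays $A$, so not moving wins (Rule B). In every remaining ending, plain NAF from the least significant~`1' wins, which I would confirm by the same $\delta(p_r)+p_r$ comparison; for instance on $1110^*$ it yields $A+p_r$ against $2A-2+p_r$ (Rule C). Summing over the finitely many endings shows Algorithm 2 always outputs a representation of minimum delay, hence of minimum $T$, i.e.\ an element of $\mathscr{N}^*_\mathcal{C}$.

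The main obstacle I anticipate is exhaustiveness together with the interaction between the local rewriting and the NAF carry. I must verify that the enumeration of endings is complete, that running \textnormal{toNAF} after the bottom edit never produces an unexpected consecutive non-zero pair that alters the gap pattern I rely on, and that each delay inequality holds uniformly for all $A\ge 2$. The boundary case $A=2$, where several of the strict improvements degenerate into equalities, will need separate attention to confirm that Algorithm 2's choice is still (weakly) optimal rather than merely tied in a way that could be beaten elsewhere.
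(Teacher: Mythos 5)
Your plan follows essentially the same route as the paper: Lemma 3 reduces the problem to choosing where the two lowest non-zero digits sit, and the paper likewise settles each ending pattern of $N_\mathcal{B}$ by computing the delays of the (at most two) candidate representations at a common index just above the modified suffix and propagating the inequality upward with Lemma 1. A few of your illustrative delay values (e.g.\ for the $11(01)^*010^*$ and $1110^*$ endings) correspond only to special cases rather than the general pattern with an arbitrary prefix, but the comparison method you describe is exactly the one the paper carries out.
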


\begin{proof}
Let $\mathcal{N}$ and $\mathcal{N}'$ be some consecutive digits in canonical binary representation and the least significant `1' of $N_\mathcal{B}$ is at index $\ell$, we will consider each case as follows:

\textit{Case $11(01)^*010^*$ ending:} Let $N_\mathcal{B} = \mathcal{N}11(01)^p010^*$ for some $p \ge 0$. Following Theorem 4, we change $N_\mathcal{B}$ to $\mathcal{N}11(01)^p1\bar{1}0^*$, and after changing to NAF, we have $N^*_\mathcal{C} = \mathcal{N}'00(\bar{1}0)^p\bar{1}\bar{1}0^*$. Consider the case where we do not change the position, after changing $N_\mathcal{B}$ to NAF, we have $N'_\mathcal{C} = \mathcal{N}'0\bar{1}(01)^p010^*$. Hence, $\delta(N'_\mathcal{C},\ell+2p+3) = (A-1)+p(A-2)$ and $\delta(N^*_\mathcal{C},\ell+2p+3) = A+p(A-2)-2 \le \delta(N'_\mathcal{C},\ell+2p+3)$.
Because both prefixes are $\mathcal{N}'$, by Lemma 1, we can conclude that Theorem 4 gives representation $N^*_\mathcal{C}$ with smaller delay. 

For other 4 cases, similar arguments can be applied.
\end{proof}

From Theorem 4, since our $N^*_\mathcal{C}$ is in NAF (except the least significant non-zero digit), we have an upper bound of optimal computation time 
when $0<2D\le A$ as follows.

\begin{corollary}
Let $D=1$, $A \ge 2$, and $N_\mathcal{B}=n_\lambda...n_0$. The upper bound of the parallel scalar point multiplication time using $N^*_\mathcal{C}=n'_{\lambda+1}...n'_0$ from Algorithm 2 is
$T(N^*_\mathcal{C},\lambda+1) \le \left( \frac{1}{2}\lambda+1 \right) A+D.$
\end{corollary}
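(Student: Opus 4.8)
The plan is to work directly with Definition 2 and exploit the structural guarantee of Theorem 4 that $N^*_\mathcal{C}$ is in non-adjacent form above its least significant non-zero digit. First I would list the non-zero positions of $N^*_\mathcal{C}$ as $p_1 < p_2 < \cdots < p_w \le \lambda+1$, with $p_1$ the least significant and $p_w$ the most significant. Since a zero digit leaves the time unchanged (second case of Definition 2), all trailing zeros above $p_w$ are irrelevant and $T(N^*_\mathcal{C},\lambda+1)=T(N^*_\mathcal{C},p_w)$. Writing $T_j:=T(N^*_\mathcal{C},p_j)$, the remaining cases of Definition 2 give $T_1=p_1$ (the copy at the least significant non-zero digit, since $|n'_{p_1}|=1$ and $D=1$) and $T_j=\max(T_{j-1},p_j)+A$ for $j\ge 2$.

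Next I would unroll this recurrence into the closed form
$$T_w=\max\left(p_1+(w-1)A,\ \max_{2\le j\le w}\big(p_j+(w-j+1)A\big)\right),$$
so the corollary reduces to bounding every term by $\left(\tfrac12\lambda+1\right)A+1$. For a term with $j\ge 2$ I would invoke the non-adjacency from Theorem 4: the positions $p_j<\cdots<p_w$ differ by at least $2$, hence $p_w-p_j\ge 2(w-j)$; combined with $p_w\le\lambda+1$ and $p_j\ge 1$ this yields $w-j\le\tfrac12\lambda$ together with $p_j\le(\lambda+1)-2(w-j)$. Substituting gives
$$p_j+(w-j+1)A\le(\lambda+1)+(w-j)(A-2)+A,$$
and since $A\ge 2$ makes the coefficient $A-2$ nonnegative, inserting $w-j\le\tfrac12\lambda$ produces exactly $\left(\tfrac12\lambda+1\right)A+1=\left(\tfrac12\lambda+1\right)A+D$.

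It then remains to handle the term $j=1$ and the single way in which $N^*_\mathcal{C}$ can fail to be globally NAF, namely the endings produced by Algorithm 2 (for instance the $1\bar{1}$-type ending of Example 6) in which $p_1$ and $p_2$ may be adjacent. I expect this bookkeeping to be the main obstacle, because the estimate above is tight at $A=2$, so a careless constant would destroy the result. For $j=1$ I would bound $p_1+(w-1)A$ by the same spacing inequality while allowing $p_2=p_1+1$, which gives $w-1\le\tfrac12\lambda+1$ and leaves a full additive slack, placing that term safely below the target. For the $j\ge 2$ terms the only effect of the exception is that I rely on $p_2\ge 1$ rather than $p_2\ge 2$, which the derivation already tolerates; hence every term in the maximum meets the bound and the corollary follows.
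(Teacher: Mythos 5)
Your argument is correct, and it takes a genuinely different route from the paper. The paper's proof is a one-line extremal-example argument: since $N^*_\mathcal{C}$ is NAF except at the least significant non-zero digit, the worst case ``could be'' $(10)^{\lambda/2}11$, which evaluates to exactly $\left(\tfrac12\lambda+1\right)A+D$; extremality of that pattern is asserted rather than derived. You instead prove the bound for every output of Algorithm~2 by unrolling the recurrence $T_j=\max(T_{j-1},p_j)+A$ into $T_w=\max\bigl(p_1+(w-1)A,\max_{2\le j\le w}(p_j+(w-j+1)A)\bigr)$ and bounding each term via the spacing inequality $p_w-p_j\ge 2(w-j)$; this is more work but actually establishes that no other NAF-above-the-bottom pattern can exceed the claimed bound, which the paper leaves implicit. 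Your computation for $j\ge 2$ is exact and tight at $A=2$ (and at the paper's extremal string, where $p_2=1$ and $w-2=\lambda/2$). One spot to tighten in the write-up: for the $j=1$ term, the inequality $w-1\le\tfrac12\lambda+1$ alone does not suffice, since $p_1+\left(\tfrac12\lambda+1\right)A$ exceeds the target whenever $p_1>1$; you must also carry the spacing bound $p_1\le p_w-\bigl(2(w-2)+1\bigr)\le\lambda+2-2(w-1)$, which combined with $A\ge 2$ gives $p_1+(w-1)A\le\lambda+2+(w-1)(A-2)\le\left(\tfrac12\lambda+1\right)A$, safely below the target. Your phrase ``by the same spacing inequality'' suggests you intend exactly this, so I read it as a presentational looseness rather than a gap; with that step made explicit (and the trivial case $w=1$ noted), the proof is complete.
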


\begin{proof}
Since our $N^*_\mathcal{C}$ is in NAF except the least significant non-zero digit, the worst representation could be in the form $(10)^{\lambda/2}11$ which has $T(N^*_\mathcal{C},\lambda+1)=\left(\frac{1}{2}\lambda+1 \right) A+D$.
\end{proof}

\noindent Note that $T(N_\mathcal{B},\lambda) \le \lambda A+D$. This bound is tight since it can be achieved from $1^{\lambda+1}$. This means Algorithm 2 generates representations with lower upper bound of computation time.

Moreover, we can see that 
the delay of NAF is different from the optimal no more than 1 (consider case $11(01)^*010^*$ and $0(01)^*0110^*$). Hence, NAF is almost optimal in this case.

\section{Optimal Representation when $0 < D \le A <2D$}\label{optimal2}

\subsection{Algorithm}\label{algo2}

When $1\le A<2$ ($D=1$), we cannot use Algorithm 2 
because Lemmas 2 and 3 do not hold. 
Consider Example 7. 

\begin{example}
Let 
$D=1$, $A=1.2$, and $n = 29$. If we calculate $29P$ using $N_\mathcal{B} = 011101$, we have $\delta(N_\mathcal{B},\lambda)=0.6$. If we use $N_\mathcal{C}=1000\bar{1}\bar{1}$, 
we have $\delta(N_\mathcal{C},\lambda)=1.2$ which is not optimal. 
Using $N_\mathcal{C}=100\bar{1}01$ also gives $\delta(N_\mathcal{C},\lambda)=1.2$. 
\hfill $\square$
\end{example}

Fortunately, we can construct $N^*_\mathcal{C} \in \mathscr{N}^*_\mathcal{C}$ from $N_\mathcal{B}$ in the case where $1\le A<2$ using Algorithm 3.

\begin{algorithm}[ht]
\DontPrintSemicolon
\caption{Changing binary representation to optimal representation when $0 < D \le A < 2D$}
\SetKwInOut{Input}{input}
\SetKwInOut{Output}{output}
\Input{$N_\mathcal{B} = n_\lambda ...n_0$}
\Output{$N^*_\mathcal{C} = n'_{\lambda+1} ...n'_0 \in \mathscr{N}^*_\mathcal{C}$}
\Begin{
	$N^*_\mathcal{C} \leftarrow N_\mathcal{B}$\;
	$\ell \leftarrow$ index of the least significant `1' of $N^*_\mathcal{C}$\;
    $n'_{\lambda+1} \leftarrow 0$\;
    $d \leftarrow 0$\;
    \For {$i \leftarrow \ell+1$ \KwTo $\lambda+1$} {
    	\lIf {$n'_i=1$} {$d \leftarrow \max(d+(A-1),A)$}
        \Else {
        	$d \leftarrow d-1$\;
            \If {$d>A$} {
            	\tcp{"flipping" $n'_i...n'_\ell$ }
            	$n'_\ell \leftarrow \bar{1}$\;
            	\lFor {$j \leftarrow \ell+1$ \KwTo $i-1$} {
                	$n'_j \leftarrow n'_j-1$
                }
                $n'_i \leftarrow 1$\;
            	$d \leftarrow A$\;
                $\ell \leftarrow i$\;
            }
            \lElseIf {$d \le 1$} {$\ell \leftarrow i+1$}
        }
    }
    \Return $N^*_\mathcal{C}$\;
}
\end{algorithm}

In Algorithm 3, we consider each digit from the least significant `1' to $n'_{\lambda+1}$ and calculate the delay at each digit. 
$\ell$ 
keeps the index of the least significant digit still in consideration. When we encounter $n_i=0$, if delay $d> A$, we flip $n'_i...n'_\ell$ from $01...11$ to $10...0\bar{1}$, set $d \leftarrow A$, and set $\ell \leftarrow i$ (start new considering sequence at $n_i$). If delay $d \le 1$, we set $\ell \leftarrow i+1$ (start new sequence at $n_{i+1}$). In the case where $1<d\le A$, we keep going until one of the previous cases occurs. 

\begin{example}
Let $D=1$, $A=1.7$, and $n = 13911$. $N_\mathcal{B} = 011011001010111$ with $n_{\lambda+1}=0$. From Algorithm 3, 
we have
\begin{align*}
i &= 3; & d &= 1.4 \in (1,A] & \ell &= 0 \\
i &= 5; & d &= 1.1 \in (1,A] & \ell &= 0 \\
i &= 7; & d &= 0.8 \le 1 & \ell &= 8 \\
i &= 8; & d &= -0.2 \le 1 & \ell &= 9 \\
i &= 11; & d &= 1.4 \in (1,A] & \ell &= 9 \\
i &= 14; & d &= 1.8 > A.
\end{align*}
We have $d = 1.8 > A$, so we flip $n_{14}...n_9$ from $011011$ to $100\bar{1}0\bar{1}$ and get $N^*_\mathcal{C} = 100\bar{1}0\bar{1}001010111$. We also set $\ell = 14$ and then algorithm terminates. \hfill $\square$
\end{example}

We can see that Algorithm 3 has $O(\log_2 n)$ complexity and uses $O(1)$ additional space.

\subsection{Optimality Proof for Algorithm 3}\label{proof_algo2}


\begin{lemma}
Let $D=1$, $1\le A<2$, $N_\mathcal{B}=n_\lambda...n_0$ and delay $d$ at $\ell-1$ is no more than $1$. Following Algorithm 3, if $n_i=0$, $d>A$, and we flip $n_i...n_\ell$, we have delay $d$ at $i$ equals to $A$.
\end{lemma}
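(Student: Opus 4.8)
The plan is to show that after the flip, position $i$ becomes a point where the addition processor must wait for the doubling processor, so its delay is forced to exactly $A$. Writing $d'_j$ for the delay of the flipped representation at position $j$ (as given by Definition 3), it suffices to prove $d'_{i-1} \le 1$: the recurrence then yields $d'_i = \max(d'_{i-1}+(A-1),A)=A$, since $d'_{i-1}+(A-1)\le A$.

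First I would pin down the shape of the original block $n_i \ldots n_\ell$. Because the loop keeps the same $\ell$ throughout $[\ell+1,i]$ until the flip at $i$, at every zero digit strictly between $\ell$ and $i$ the running delay must lie in $(1,A]$: otherwise the restart branch (taken when $d\le 1$) or an earlier flip (taken when $d>A$) would have fired before reaching $i$. For such an interior zero at position $m$ we have $d_{m-1}=d_m+1>2$, and since $A<2$ a zero digit can never leave a delay above $A$; hence $n_{m-1}$ must be a $1$. The same computation at the flip position gives $d_{i-1}=d_i+1>A+1>2$, forcing $n_{i-1}=1$, and at $\ell+1$ it forces $n_{\ell+1}=1$. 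Thus the original block contains no two consecutive zeros, has $n_\ell=n_{\ell+1}=1$, and ends with $n_{i-1}=1$, $n_i=0$.

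Next I would read off the flipped block: the flip sends $n_\ell=1\mapsto \bar{1}$, each interior $1\mapsto 0$, each interior $0\mapsto \bar{1}$, and $n_i=0\mapsto 1$. Since every original zero is flanked by $1$'s and the block begins and ends with $1$'s at the relevant places, the nonzero digits of the flipped block (at $\ell$, at the former zeros, and at $i$) are pairwise non-adjacent: between any two consecutive nonzeros lies at least one flipped zero. I would then induct over the nonzero positions $\ell \le q \le i$, claiming $d'_q=A$ for each. The base case $q=\ell$ uses the hypothesis $d'_{\ell-1}\le 1$ (the flip does not touch positions below $\ell$) together with the recurrence. For the step, after the previous nonzero reset the delay to $A$, the at-least-one intervening zero drops it to at most $A-1<1$, so the next nonzero resets it to $A$ again. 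As $n'_i=1$ is such a nonzero position, this gives $d'_i=A$, the claim.

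The main obstacle is the structural step, not the final induction. One must argue carefully that Algorithm 3 can neither restart nor flip anywhere inside $(\ell,i)$, so that the delay genuinely stays in $(1,A]$ at every interior zero, and then convert this into the combinatorial statement that every zero in the block is preceded by a $1$. Once the flipped block is known to have only isolated nonzero digits, the bound $A<2$ does the rest: a single zero already pushes the delay below $1$, which forces the $\max$ in the recurrence to select $A$ at the following nonzero digit.
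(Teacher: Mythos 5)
Your argument is essentially the same as the paper's: both first show that the flipped block $n_i\ldots n_\ell$ has the form $01^{p_k}\ldots 01^{p_1}$ with no two consecutive zeros (because a delay in $(1,A]$ followed by a second zero would drop below $1$ and trigger a restart, while a delay above $A$ would have triggered an earlier flip), and then both run an induction showing the delay resets to exactly $A$ at each nonzero digit of the flipped block, using $A<2$ to force the intervening zeros to push the delay below $1$ so that the $\max$ selects $A$. One small slip: your base case asserts $d'_\ell=A$, but when $n_\ell$ is the least significant $1$ of $N_\mathcal{B}$ the flipped digit $\bar{1}$ at position $\ell$ is the least significant nonzero digit of the whole representation, so Definition 3 gives $d'_\ell=0$, not $A$; the paper treats this sub-case explicitly. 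This does not damage your induction, since the step only needs the delay at the previous nonzero to be at most $A$ (so that it falls below $1$ after at least one zero), but the base case should be stated as $d'_\ell\in\{0,A\}$, and you should also note explicitly that $p_1\ge 2$ (i.e., $n_{\ell+1}=1$), which is what guarantees a zero separates $\ell$ from the next nonzero position.
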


\begin{proof}
We 
define $d^{(0)} \le 1$ as the delay at $\ell-1$. We 
see that 
$n_i...n_\ell$ must be in the form $01^{p_k}...01^{p_2}01^{p_1}$ for some $k>0$ and $p_j>0$ for all $1 \le j \le k$. The sequence has no consecutive zeros because if $1<d \le A$ at the first zero, $d$ is then less than 1 after the second zero and `00' is not in the sequence.

We prove this lemma by induction on $j$ from $1$ to $k$. We define $d^{(j)}$ as the delay after considering up to $01^{p_j}$. 
Consider $01^{p_1}$ when the rightmost 
`1' 
is not the least significant `1' of $N_\mathcal{B}$. If $p_1=1$, 
we have $d^{(1)}\le 1$, which means this case cannot happen. For $p_1>1$, changing $01^{p_1}$ to $10^{p_1-1}\bar{1}$ makes $d^{(1)}=\max(A-(p_1-1)+(A-1),A)=A$ since $A-(p_1-1)<1$. 
When the rightmost 
`1' 
is also the least significant `1' of $N_\mathcal{B}$, we also have $d^{(1)}=A$ for all $p_1>0$.

For $01^{p_j}$, $j>1$, by induction, 
flipping $01^{p_{j-1}}...01^{p_1}$ gives $10^{p_{j-1}}...\bar{1}0^{p_1-1}\bar{1}$ with $d^{(j-1)}=A$. When we flip $01^{p_j}1$ to $10^{p_j}\bar{1}$ with `1' from $10^{p_{j-1}}$, since ${d^{(j-1)}-p_j=A-p_j<1}$, we also have
$d^{(j)} =\max(d^{(j-1)}-p_j+(A-1),A) = A.$
\end{proof}

\begin{lemma}
Let $D=1$, $1\le A<2$, and $N_\mathcal{B}=n_\lambda...n_0$. From 
Algorithm 3, all 
$n_i...n_\ell$ have delay $d$ at $\ell-1$ no more than 1.
\end{lemma}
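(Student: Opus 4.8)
The plan is to prove the statement as a loop invariant of Algorithm 3: each time the variable $\ell$ is (re)assigned, thereby starting a new considering sequence, the delay at position $\ell-1$ in the current representation $N^*_\mathcal{C}$ is at most $1$. The variable $\ell$ is assigned in exactly three places --- the initialization, the branch where $d\le 1$, and the flipping branch where $d>A$ --- so it suffices to verify the bound at each assignment, which I would organize as an induction over the successive values taken by $\ell$.

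For the base case, $\ell$ is set to the index of the least significant `1' of $N_\mathcal{B}$. Every position below $\ell$ carries a zero digit, so by Definition 3 the delay only decreases as we pass these positions, giving $\delta(N^*_\mathcal{C},\ell-1)\le 0\le 1$ (and the claim is vacuous when $\ell=0$). The branch with $d\le 1$ is immediate: there we set $\ell\leftarrow i+1$ while $d$ holds the delay at position $i=\ell-1$, and the guard forces $d\le 1$.

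The only substantive case is the flipping branch, and here I would invoke Lemma 4. Its hypothesis --- that the delay at the current $\ell-1$ is at most $1$ --- is exactly the inductive hypothesis, so Lemma 4 applies and shows that after flipping $n'_i\dots n'_\ell$ the delay at position $i$ equals $A$. In the flipped representation $n'_i=1$, and since the flipped block still contains the non-zero digit $\bar{1}$ at the old $\ell$, the digit $n'_i$ is not the least significant non-zero digit; hence the last case of Definition 3 gives $\delta(N^*_\mathcal{C},i)=\max(\delta(N^*_\mathcal{C},i-1)+(A-1),A)=A$. Because this maximum attains its second argument, we must have $\delta(N^*_\mathcal{C},i-1)+(A-1)\le A$, that is $\delta(N^*_\mathcal{C},i-1)\le 1$. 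As this branch sets the new $\ell$ to $i$, this is exactly the bound $\delta(N^*_\mathcal{C},\ell-1)\le 1$ required for the next sequence, closing the induction.

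I expect the main obstacle to be bookkeeping rather than a deep idea. One must carefully distinguish the algorithm's running scalar $d$ from the formal delay $\delta(N^*_\mathcal{C},\cdot)$, confirm that $n'_i$ is genuinely not the least significant non-zero digit after a flip so that the correct case of Definition 3 is used, and check that Lemma 4's structural assumption on the shape $01^{p_k}\cdots01^{p_1}$ of the flipped block is consistent with the sequences the algorithm actually emits. The point worth highlighting is the interlock between the two lemmas: Lemma 4 consumes the invariant ``delay at $\ell-1\le 1$'' and returns ``delay at $i$ equals $A$,'' from which the same invariant is regenerated for the subsequent sequence.
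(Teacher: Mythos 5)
Your proposal is correct and follows essentially the same route as the paper: an induction over the successive values of $\ell$ (the paper phrases it as an induction over the considered sequences $n_{i_j}\dots n_{\ell_j}$), with the base case coming from the zeros below the least significant `1', the $d\le 1$ branch immediate, and the flipping branch handled by Lemma~4 together with the observation that $\max(\delta(N^*_\mathcal{C},i-1)+(A-1),A)=A$ forces $\delta(N^*_\mathcal{C},i-1)\le 1$. The paper uses exactly this last inference as well, so there is no substantive difference.
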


\begin{proof}
We prove this lemma by induction on each sequence $n_{i_j}...n_{\ell_j}$ from the least to the most significant digit. We first consider the least significant sequence $n_{i_0}...n_{\ell_0}$. Since $n_{\ell_0}$ is the least significant non-zero digit of $N_\mathcal{B}$, the delay $d$ at $\ell_0-1$ is definitely no more than 1. The base case is proved.

By induction, we assume that the considering sequence $n_{i_j}...n_{\ell_j}$ has delay $d$ at $\ell_j-1$ no more than 1. If we do not flip this sequence, we have $d$ at $i_j$ no more than $1$. Because $n_k=0$ for all $i_j+1 \le k \le \ell_{j+1}-1$ (since the next sequence starts at $n_{\ell_{j+1}}$), we have $d$ at $\ell_{j+1}-1$ no more than 1.

If we flip $n_{i_j}...n_{\ell_j}$, by Lemma 4, we have $d$ at $n_{i_j}$ equals to $A$ and from Algorithm 3, the next sequence starts at $n_{i_j}$ (we have $\ell_{j+1}=i_j$). Since $n_{i_j}=1$ and $n_{i_j-1}=0$ with $d$ at $i_j$ equals to $A$, by Definition 3, we have that $d$ at $i_j-1=\ell_{j+1}-1$ must be no more than 1. The induction step is completed.
\end{proof}

\begin{lemma}
Let $D=1$, $1\le A<2$, and $N_\mathcal{B}=n_\lambda...n_0$. Following Algorithm 3, if $n_i=0$ and $d>A$, there is no representation of $n_i...n_\ell$ which gives $d$ at $i$ smaller than $A$.
\end{lemma}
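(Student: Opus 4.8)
The plan is to prove directly that $A$ is a lower bound on the delay at position $i$ for \emph{every} canonical representation of the integer $v$ carried by the block $n_i\ldots n_\ell$; since the flip of Lemma 4 attains exactly $A$, this establishes its optimality. First I would record the structural facts forced on any such representation $\sum_{k=\ell}^{i}c_k2^k=v$ with $c_k\in\{\bar{1},0,1\}$. Because $n_\ell$ is the least significant `1' of $N_\mathcal{B}$, we have $v=2^\ell u$ with $u$ odd, so dividing by $2^\ell$ forces $c_\ell\neq 0$: the least significant nonzero digit is pinned at $\ell$, where the delay is $0$. Because $n_i=0$, the top bit of $v$ sits at position $i-1$, and a magnitude count ($v+2^i$ exceeds $\sum_{k=\ell}^{i-1}2^k$) rules out $c_i=\bar{1}$, so $c_i\in\{0,1\}$ and the highest nonzero digit lies at position $i-1$ or $i$. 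Hence I only have to bound the delay accumulated across positions $\ell+1,\ldots,i$.

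Next I would pass to the scheduling form of the finish time. If the nonzero positions are $\ell=j_0<j_1<\cdots<j_r\ (\le i)$, unrolling the recursion $T(N_\mathcal{C},j_t)=\max(T(N_\mathcal{C},j_{t-1}),j_t)+A$ from the copy value $T(N_\mathcal{C},j_0)=\ell$ gives $T(N_\mathcal{C},j_r)=\max\bigl(\ell+rA,\ \max_{1\le t\le r}(j_t+(r-t+1)A)\bigr)$, with $\delta(N_\mathcal{C},i)=T(N_\mathcal{C},j_r)-i$. If $c_i=1$ then $j_r=i$ and the term $j_r+A$ already yields $\delta(N_\mathcal{C},i)\ge A$, so that case is immediate. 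The real work is the case $c_i=0$, where the highest nonzero digit is $c_{i-1}=1$; there the term $j_r+A$ gives only the weak bound $\delta(N_\mathcal{C},i)\ge A-1$, and I must recover the remaining unit of delay.

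To close that gap I would exploit the hypothesis $d>A$, which via Lemma 5 and the analysis in Lemma 4 forces the block to have the dense form $01^{p_k}0\,1^{p_{k-1}}0\cdots0\,1^{p_1}$ with single zeros between runs. Density is essential here: a sparse value with the same top bit can be written with few additions, so no bound using the magnitude of $v$ alone can reach $A$ when $A<2$ — for instance a single extra addition at position $i-2$ only yields $2(A-1)<A$. Instead I would induct over the runs from least to most significant, maintaining for each internal zero $z_j$ a lower bound on $T(N_\mathcal{C},z_j)$ valid for every representation of the low-order part of $v$, i.e. of $v\bmod 2^{\,z_j+1}$ allowing a carry (a multiple of $2^{\,z_j+1}$). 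Feeding this per-run bound together with the back-to-back term $\ell+rA$ — which charges all the additions that the density forces — into the displayed maximum, I would show that at the top run the two lower bounds on $T$ cannot both be small, so $T(N_\mathcal{C},i-1)\ge i+A$ and therefore $\delta(N_\mathcal{C},i)\ge A$.

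I expect the main obstacle to be exactly this last step: balancing the two scheduling lower bounds — positional waits ($j_t+(r-t+1)A$) against back-to-back work ($\ell+rA$) — against the density condition, while tracking carries correctly in the inductive invariant. Intuitively, a representation with few additions is forced to place them at high positions (positional bound), whereas a representation with many additions is penalised by $\ell+rA$; proving that the minimum of this trade-off equals the value $A$ achieved by the flip is the technical heart. A cleaner route I would also attempt, in the spirit of Proposition 1 and Lemma 3, is a local-exchange argument that rewrites an arbitrary representation of the block into the flipped form by delay-non-increasing steps, so that any $\delta(N_\mathcal{C},i)<A$ would contradict the exact value $A$ guaranteed by Lemma 4.
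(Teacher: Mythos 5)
Your setup is sound and your easy case is right: the structural facts (the least significant nonzero digit pinned at $\ell$, $c_i\in\{0,1\}$), the unrolled scheduling formula, and the observation that $c_i=1$ forces $\delta(N_\mathcal{C},i)\ge A$ via the term $j_r+A=i+A$ are all correct. (One small caveat: the copy value $T(N_\mathcal{C},j_0)=\ell$ holds only for the first block processed by Algorithm~3; for a later block there are nonzero digits below $\ell$, Lemma~5 only guarantees delay at most $1$ at $\ell-1$, and the first nonzero digit of the block is a genuine addition. This only strengthens your bound, but the formula as written does not cover that situation.)

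The genuine gap is the case $c_i=0$, and you have named it yourself without closing it. There the positional term yields only $\delta\ge A-1$, and recovering the missing unit is exactly the content of the lemma; your proposal for it is an induction over runs whose invariant is never stated, followed by the admission that proving the trade-off bottoms out at $A$ is ``the technical heart.'' A plan plus an acknowledgement that the central step is open is not a proof. Your own formula shows why the step is delicate: when $c_i=0$ the magnitude argument forces the top run to be consecutive ones, so the candidate term $j_{r-s}+(s+1)A=i+A-1+s(A-1)$ reaches $i+A$ only when $s(A-1)\ge 1$, which fails when that run is short; one must then descend into the lower runs and control the carry choice at each inter-run zero. The paper performs exactly this descent, but by a more elementary combinatorial route: working from the top zero downward through the dense block $01^{p_k}0\cdots 01^{p_1}$, it argues that any representation with delay below $A$ must leave every inter-run zero equal to $0$ (carrying into any of them drives the delay to at least $A$ from that point on), and that keeping them all zero forces, by magnitude, the representation to coincide with the original block, whose delay exceeds $A$ --- a contradiction. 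Your closing suggestion of a local-exchange argument is much closer to what the paper actually does; if you instead pursue the scheduling route, you still owe the per-run invariant and the carry bookkeeping you deferred.
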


\begin{proof}
We prove this lemma by contradiction. We consider 
$n_i...n_\ell$ in the form $01^{p_k}...01^{p_2}01^{p_1}$ for some $k>0$ and $p_j>0$ for all $1 \le j \le k$. To have $d^{(k)}<A$, `0' in $01^{p_k}$ must be left as `0' because if it is changed to `1', we have $d^{(k)} \ge A$. So, we consider `0' in $01^{p_{k-1}}$ and it must be left as `0' because if it is changed to `1', we have $d^{(k-1)} \ge A$ which makes $d^{(k)} \ge A$ (since before changing, we have $1<d^{(k-1)}\le A$ but $d^{(k)}> A$ already). Using the same reason, we have that `0' in $01^{p_1}$ must not be changed to have $d^{(k)}<A$. This contradicts the assumption since $01^{p_k}...01^{p_2}01^{p_1}$ has $d^{(k)}>A$.
\end{proof}

\begin{theorem}[Optimal representation when $0<D\le A<2D$]
Algorithm 3 produces 
$N^*_\mathcal{C} \in \mathscr{N}^*_\mathcal{C}$ for $0<D\le A<2D$. That is, 
flipping $n_i...n_\ell$ when $d>A$ $(D=1)$ gives 
smallest delay.
\end{theorem}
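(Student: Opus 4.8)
The plan is to prove optimality of Algorithm~3 by combining Lemmas~4--6 through the segment decomposition that the algorithm itself induces, and then propagating local optimality up to position $\lambda$ via Lemma~1. First I would reduce the comparison: by Proposition~1 it suffices to range over canonical representations, and since every nonzero canonical digit has absolute value $1$, Definition~3 shows that $\delta(N_\mathcal{C},i)$ depends only on the \emph{pattern} of zero versus nonzero positions, not on the signs. Hence two canonical representations of $n$ with the same nonzero pattern have identical delay, and the task becomes selecting a delay-minimizing pattern among all canonical representations of $n$. Algorithm~3 scans from the least significant $1$ upward and resets its segment-start index $\ell$ exactly at the zeros where it either flips (when $d>A$, setting $\ell \leftarrow i$) or abandons the current run (when $d\le 1$, setting $\ell \leftarrow i+1$); this cuts the positions into consecutive segments, and by Lemma~5 each segment begins with delay at most $1$ at the position just below its start.

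Next I would establish local optimality on each segment, splitting into the flip and no-flip cases. In the flip case the scan reaches a zero at index $i$ with $d>A$; Lemma~4 gives that after flipping $n_i\dots n_\ell$ the delay at $i$ equals exactly $A$, while Lemma~6 states that no canonical representation of the digits $n_i\dots n_\ell$ can bring the delay at $i$ below $A$. Starting from the best possible entry delay (at most $1$, guaranteed by Lemma~5), the algorithm therefore attains the minimum achievable delay at that segment's top. In the no-flip case the segment closes at a zero with $d\le 1$ (or at $n'_{\lambda+1}$); here the delay is already at most $1$, and since Definition~3 forces delay at least $A\ge 1$ immediately after any nonzero digit, inserting any extra flip inside the segment could only raise the delay to at least $A$. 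Thus leaving the segment unflipped is at least as good as any alternative, and the algorithm is again locally optimal.

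Finally I would stitch the segments together by an exchange argument, carried out from the least significant segment upward and justified by Lemma~1. Starting from an arbitrary canonical representation $N_\mathcal{C}$ of $n$, I would replace its digits on the current segment by the algorithm's locally optimal choice. A flip preserves the represented value and alters no position above the segment's top index $i$, so the modified representation still encodes $n$, matches $N^*_\mathcal{C}$ on that segment, and leaves every higher digit untouched; moreover, by the local optimality just shown, its delay at $i$ does not increase. Applying Lemma~1 with the (unchanged) higher digits as the shared range then shows the delay at $\lambda$ cannot increase under this replacement. Iterating segment by segment transforms $N_\mathcal{C}$ into $N^*_\mathcal{C}$ while never increasing $\delta(\cdot,\lambda)$, whence $\delta(N^*_\mathcal{C},\lambda) \le \delta(N_\mathcal{C},\lambda)$ for every $N_\mathcal{C}$, so $N^*_\mathcal{C} \in \mathscr{N}^*_\mathcal{C}$.

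I expect the main obstacle to be the bookkeeping of this exchange, specifically keeping the hypotheses of Lemmas~4--6 valid after each replacement. A competitor's nonzero positions need not respect the algorithm's segment boundaries, so the delicate points are (i) verifying that the delay \emph{entering} each segment of the transformed representation is genuinely at most $1$ — which holds because, processing bottom-up, the lower part has already been made identical to the algorithm's output, for which Lemma~5 applies — and (ii) confirming that after the local replacement both the value $n$ and all higher digits are preserved so that Lemma~1 applies verbatim, paying attention to the fact that a flip overwrites the boundary position $i$, which becomes the start of the next segment. The exact delay $A$ from Lemma~4 and the ``no representation beats $A$'' guarantee of Lemma~6 make each individual exchange non-increasing; assembling them into a single monotone chain up to $\lambda$ through Lemma~1 is where the argument must be organized most carefully.
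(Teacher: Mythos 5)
Your proposal is correct and follows essentially the same route as the paper's proof: the segment decomposition induced by Algorithm~3, Lemma~5 for the entry delay being at most $1$, Lemmas~4 and~6 for local optimality of the flip case versus the trivial optimality of the no-flip case, and Lemma~1 to propagate the comparison up to position $\lambda$. Your write-up is in fact more explicit than the paper's own terse argument, which compresses the segment-by-segment exchange you describe into a single appeal to ``Lemmas 4--6.''
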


\begin{proof}
We consider 
$d$ after 
$n_i...n_\ell$ in two cases. If $d \le 1$, flipping 
will give 
more delay, 
so we 
do not flip it. This representation is optimal since there is no other representation with smaller delay than $d$. Otherwise, we have $d>A$, and by Lemmas 4-6, flipping 
will make $d=A$ which is the smallest delay we can achieve. The sequence is always classified in one of these two cases since there are always two consecutive zeros in front of the representation which make $d \le 1$.
\end{proof}

From Theorem 5, we have an upper bound of optimal computation time 
when $0<D\le A<2D$ as follows.

\begin{corollary}
Let $D=1$, $1\le A<2$, and $N_\mathcal{B}=n_\lambda...n_0$. The upper bounds of the delay and parallel scalar point multiplication time using $N^*_\mathcal{C}=n'_{\lambda+1}...n'_0$ from Algorithm 3 are $\delta(N^*_\mathcal{C},\lambda+1) \le A$ and $T(N^*_\mathcal{C},\lambda+1) \le A + \lambda +1$.
\end{corollary}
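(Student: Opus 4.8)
The plan is to reduce the time bound to the delay bound and then settle the delay bound by inspecting only the final step of Algorithm~3. From Definition~3 with $D=1$ we have the identity $T(N^*_\mathcal{C},\lambda+1)=\delta(N^*_\mathcal{C},\lambda+1)+(\lambda+1)$, so once I establish $\delta(N^*_\mathcal{C},\lambda+1)\le A$ the second inequality $T(N^*_\mathcal{C},\lambda+1)\le A+\lambda+1$ follows at once. Hence the entire corollary rests on the single bound $\delta(N^*_\mathcal{C},\lambda+1)\le A$, and I would state the reduction first.

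To prove the delay bound I would argue that the running variable $d$ maintained by Algorithm~3 coincides, at every index $i$, with the true delay $\delta(N^*_\mathcal{C},i)$ of the (possibly already flipped) output representation. This is precisely what the algorithm's correctness gives: Lemma~4 shows that a flip resets $d$ to $A$ and that this value equals the actual delay of the rewritten block, and Theorem~5 confirms that the resulting representation is the one whose delay $d$ tracks. Granting this, it only remains to read off the value of $d$ after the last iteration $i=\lambda+1$.

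The key observation I would use is that the last digit processed is always the leading zero $n'_{\lambda+1}=0$. A flip at an index $i$ rewrites only positions $\ell,\dots,i$, so no flip at an index $i<\lambda+1$ can alter $n'_{\lambda+1}$, and that position therefore still holds its initialized value $0$ when iteration $\lambda+1$ reads it. Consequently the final iteration enters the $n'_i=0$ branch, where $d$ is first decremented and then lands in exactly one of three sub-cases: if $d>A$ the algorithm flips and sets $d\leftarrow A$; if $d\le 1$ it leaves $d$ unchanged with $d\le 1\le A$ (using $A\ge 1$); and otherwise $1<d\le A$ already. In every sub-case $d\le A$ holds after the iteration, so $\delta(N^*_\mathcal{C},\lambda+1)=d\le A$, and the identity from the first paragraph then gives $T(N^*_\mathcal{C},\lambda+1)\le A+\lambda+1$.

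The step I expect to be the main obstacle is the justification that the algorithm's variable $d$ really equals the delay of the \emph{final} output rather than of some transient object, since the representation keeps changing as blocks are flipped. I would lean entirely on Lemma~4 and Theorem~5 for this, observing that after each flip the delay of the rewritten block equals $A$ while the untouched suffix is then processed exactly according to the recurrence in Definition~3. Once that identification is accepted, the remaining argument is the elementary three-way case analysis on the terminal zero and needs no further machinery.
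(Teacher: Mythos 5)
Your proof is correct and follows essentially the same route as the paper: both reduce the time bound to the delay bound via $T(N^*_\mathcal{C},\lambda+1)=\delta(N^*_\mathcal{C},\lambda+1)+\lambda+1$ and then bound the delay by inspecting the final position $\lambda+1$, leaning on Lemmas 4--6 and Theorem 5 to identify the algorithm's variable $d$ with the true delay of the output. Your three-way sub-case analysis on the terminal zero is just a slightly more explicit version of the paper's two-case split on whether $n'_{\lambda+1}$ ends up being $0$ or $1$ (no flip at the top versus a flip that sets $d=A$).
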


\begin{proof}
If $n'_{\lambda+1}=0$, it is obvious that $\delta(N^*_\mathcal{C},\lambda+1) \le A$. If $n'_{\lambda+1}=1$, this happens from flipping and we have $\delta(N^*_\mathcal{C},\lambda+1)=A$. Because the delay is no more than $A$ at $n'_{\lambda+1}$, hence $T(N^*_\mathcal{C},\lambda+1) = \delta(N^*_\mathcal{C},\lambda+1)+\lambda+1 \le A+\lambda+1$.
\end{proof}

\noindent Note that $T(N_\mathcal{B},\lambda) \le \lambda A+D$. This bound is tight since it can be achieved from $1^{\lambda+1}$. This means Algorithm 3 generates representations with lower upper bound of computation time.

Moreover, if we change $N_\mathcal{B}$ to NAF, its delay after considering $n_{\lambda+1}$ is also no more than $A$ but may not be optimal (as 
in Example 7). We prove this in the following proposition.

\begin{table*}[ht]
\caption{Computation time of scalar point multiplication and buffer space used with 100,000 random integers from $[1,2^{256}-1]$ }
\centering
{\small
\renewcommand{\arraystretch}{1.2}
\begin{tabular}{|c|c|c|c|c|c|c||c|c|c|c|c|c|} 
 \hline
 \multirow{3}{*}{$\displaystyle\frac{A}{D}$} & \multicolumn{6}{|c||}{Computation Time} & \multicolumn{6}{|c|}{Buffer Space} \\  \cline{2-13} &
 \multicolumn{2}{|c|}{using $N_\mathcal{B}$} & \multicolumn{2}{|c|}{using $N^*_\mathcal{C}$} & \multicolumn{2}{|c||}{using NAF} & \multicolumn{2}{|c|}{using $N_\mathcal{B}$} & \multicolumn{2}{|c|}{using $N^*_\mathcal{C}$} & \multicolumn{2}{|c|}{using NAF} \\ \cline{2-13} 
 & avg. & max & avg. & max & avg. & max & avg. & max & avg. & max & avg. & max \\ \hline
1.00 & \textbf{255.0} & \textbf{256.0} & \textbf{255.0} & \textbf{256.0} & 255.7 & 257.0 & \textbf{1.000} & \textbf{1} & \textbf{1.000} & \textbf{1} & \textbf{1.000} & \textbf{1} \\
1.25 & \textbf{255.5} & 261.0 & \textbf{255.5} & \textbf{257.3} & 255.9 & \textbf{257.3} & 2.682 & 6 & 2.000 & 2 & \textbf{1.000} & \textbf{1} \\
 1.50 & 256.3 & 268.5 & \textbf{255.9} & \textbf{257.5} & 256.2 & \textbf{257.5} & 3.854 & 10 & 2.000 & 2 & \textbf{1.000} & \textbf{1} \\
 1.75 & 258.4 & 291.3 & \textbf{256.3} & \textbf{257.7} & 256.4 & \textbf{257.7} & 5.991 & 22 & 2.000 & 2 & \textbf{1.000} & \textbf{1} \\
 \hline
 2.00 & 268.2 & 326.0 & \textbf{256.7} & \textbf{258.0} & \textbf{256.7} & \textbf{258.0} & 10.238 & 37 & \textbf{1.000} & \textbf{1} & \textbf{1.000} & \textbf{1} \\
 2.25 & 292.2 & 366.5 & \textbf{257.2} & \textbf{264.3} & \textbf{257.2} & \textbf{264.3} & 18.817 & 50 & \textbf{2.043} & \textbf{4} & 2.044 & \textbf{4} \\
 2.50 & 322.1 & 407.0 & \textbf{258.0} & \textbf{274.5} & \textbf{258.0} & \textbf{274.5} & 28.245 & 61 & \textbf{2.742} & \textbf{7} & 2.745 & \textbf{7} \\
 2.75 & 353.3 & 447.5 & \textbf{260.0} & \textbf{289.8} & \textbf{260.0} & \textbf{289.8} & 36.824 & 69 & \textbf{3.974} & \textbf{13} & 3.979 & \textbf{13} \\
 \hline
\end{tabular}}
\end{table*}

\begin{proposition}
Let $N_\mathcal{C}$ be a NAF representation of $n$.
When $D=1$ and $1 \leq A < 2$, we have $\delta(N_\mathcal{C}, i) \leq A$ for all $i \geq 0$. Furthermore, if $n_i = 0$, then $\delta(N_\mathcal{C},i) \leq A - 1$. 
\end{proposition}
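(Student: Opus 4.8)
The plan is to prove both assertions \emph{simultaneously} by induction on the index $i$, unfolding the recursive formula for $\delta$ from Definition 3. The two parts are deliberately coupled: the bound $\delta(N_\mathcal{C},i) \le A$ cannot be maintained on its own at a non-zero digit, and the sharper estimate $\delta(N_\mathcal{C},i) \le A-1$ at zero digits is exactly the strengthened invariant needed to close the induction. So I would carry both statements through the induction together rather than proving the first and then the second.

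For the base case $i=0$, Definition 3 gives $\delta(N_\mathcal{C},0)=0$, which is at most $A$; and when $n_0=0$ it is at most $A-1$ since $A\ge 1$. For the inductive step at $i\ge 1$, assuming both claims for all smaller indices, I would split according to the four branches of Definition 3. If $n_i=0$, then $\delta(N_\mathcal{C},i)=\delta(N_\mathcal{C},i-1)-1\le A-1$ by the first invariant at $i-1$; this yields the second claim directly and the first as well, since $A-1\le A$. If $n_i$ is the least significant non-zero digit, then $\delta(N_\mathcal{C},i)=0\le A$, and the second claim is vacuous.

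The decisive case is $n_i\neq 0$ with $n_i$ not the least significant non-zero digit, where $\delta(N_\mathcal{C},i)=\max\!\left(\delta(N_\mathcal{C},i-1)+(A-1),\,A\right)$. Here I would invoke the non-adjacency property of NAF: since $n_i\neq 0$, necessarily $n_{i-1}=0$, so the second invariant applied at $i-1$ gives $\delta(N_\mathcal{C},i-1)\le A-1$. Consequently $\delta(N_\mathcal{C},i-1)+(A-1)\le 2(A-1)\le A$, the last step using $A<2$. The maximum therefore collapses to $A$, establishing $\delta(N_\mathcal{C},i)\le A$; the second claim is again vacuous.

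The crux of the argument — which one might regard as the main obstacle — is precisely the inequality $2(A-1)\le A$, equivalently $A\le 2$. This is where the hypothesis $A<2$ is essential, and it is also where the strengthened inductive invariant $\delta\le A-1$ at zero digits, combined with the no-consecutive-non-zero-digit structure of NAF forcing $n_{i-1}=0$, becomes indispensable: without the $A-1$ bound at $i-1$ the term $\delta(N_\mathcal{C},i-1)+(A-1)$ could exceed $A$. Everything else reduces to a routine unfolding of the recursion.
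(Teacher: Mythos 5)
Your proof is correct and takes essentially the same route as the paper's: a simultaneous induction carrying both bounds, with the non-adjacency of NAF forcing $n_{i-1}=0$ so that the sharper invariant $\delta(N_\mathcal{C},i-1)\le A-1$ applies and $\delta(N_\mathcal{C},i-1)+(A-1)\le 2(A-1)\le A$ follows from $A<2$. The paper writes this last step as $\delta(N_\mathcal{C},i-1)+(A-1)\le (A-1)+1=A$, which is the same inequality.
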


\begin{proof}
We prove this proposition by induction on $i$. First, we have 
$\delta(N_\mathcal{C}, 0) = 0$. 
For $i \ge 1$, assume that $\delta(N_\mathcal{C}, i - 1) \leq A$ when $n_{i-1} \neq 0$, and $\delta(N_\mathcal{C}, i - 1) \leq A - 1$ when $n_{i - 1} = 0$.

If $n_i = 0$, then $\delta(N_\mathcal{C}, i) = \delta(N_\mathcal{C}, i - 1) - 1 \le A - 1$.

If $n_i \neq 0$, then, because there is not a consecutive non-zeros in NAF representation, we have $n_{i - 1} = 0$. Hence, $\delta(N_\mathcal{C}, i) = 0$, $\delta(N_\mathcal{C}, i) = A$, or 
$\delta(N_\mathcal{C}, i) = \delta(N_\mathcal{C}, i - 1) + (A - 1)
 \le \delta(N_\mathcal{C}, i - 1) + 1 \leq (A - 1) + 1 = A$.
\qedhere
\end{proof}

Since the optimal delay 
is no more than $A$, the delay from NAF is not larger than the optimal delay 
by more than $A<2$. Hence, NAF is also almost optimal in this case.

\section{Experimental Results}\label{exp}

We compare the parallel computation time and the buffer size required 
when we use 
$N_\mathcal{B}$, 
$N^*_\mathcal{C}$, and NAFs in Table~I. The buffer is used when doubling processor finishes its work before addition processor uses it. 
We keep only what is going to be used, i.e. keep $2^iP$ only when $n_i \neq 0$. The experimental results are obtained from 100,000 random integers from $[1,2^{256}-1]$.

We can see 
that NAFs are not always optimal. 
However, 
the difference from optimal 
is always less than $1\%$. 
Also, the buffer size 
obtained from NAF and our optimal is almost equal. 
Hence, NAF is almost an optimal choice for our model. On the other hand, the results from NAF and $N_\mathcal{C}^*$ are much better than those obtained from $N_\mathcal{B}$, especially when $A / D \ge 2$. The improvement from $N_\mathcal{B}$ in the parallel computation time is as large as $4.4\%$ when $A / D = 2$, and the improvement is as large as $35.9\%$ when $A / D = 2.75$.

\section{Conclusion and Future Work}\label{future}
This paper presents that NAF is almost optimal representation for our proposed time model for ``parallel double-and-add" scalar point multiplication. This is because NAF uses a little more time in the model, nearly the same buffer space, and the same time to generate the representation as our optimal representation. 
Currently, we are aiming to extend our idea to the calculation of the multi-scalar point multiplication, where we want to compute $nP + mQ$ for some integers $n,m$ and elliptic points $P, Q$. Also, we plan to consider the communication time between two processors, and other parallel settings such as 
SIMD or SIMT paradigm \cite{mccool2012structured}.

\section*{Acknowledgment}
This research was mainly done while Kittiphon Phalakarn and Kittiphop Phalakarn did an internship at The University of Tokyo. The authors would like to thank The University of Tokyo, Asst. Prof. Athasit Surarerks and Prof. Hiroshi Imai for facilitating the internship. This work was supported by JST ERATO Grant Number JPMJER1201, Japan. 

\bibliographystyle{ieeetr}
\bibliography{main}

\begin{thebibliography}{10}

\bibitem{slidingwindow1}
D.~M. Gordon, ``A survey of fast exponentiation methods,'' {\em Journal of
  algorithms}, vol.~27, no.~1, pp.~129--146, 1998.

\bibitem{slidingwindow2}
E.~G. Thurber, ``On addition chains $\ell(mn) \leq \ell(n)-b$ and lower bounds
  for $c(r)$,'' {\em Duke Mathematical Journal}, vol.~40, no.~4, pp.~907--913,
  1973.

\bibitem{naf}
G.~W. Reitwiesner, ``Binary arithmetic,'' {\em Advances in computers}, vol.~1,
  pp.~231--308, 1960.

\bibitem{wnaf}
K.~Koyama and Y.~Tsuruoka, ``Speeding up elliptic cryptosystems by using a
  signed binary window method,'' in {\em Proceedings of the 1992 Annual
  International Cryptology Conference (Crypto 1992)}, pp.~345--357, Springer,
  1992.

\bibitem{fwnaf}
B.~M{\"o}ller, ``Improved techniques for fast exponentiation,'' in {\em
  Proceedings of the 2002 International Conference on Information Security and
  Cryptology (ICISC 2002)}, pp.~298--312, Springer, 2002.

\bibitem{garcia2002parallel}
J.~M.~G. Garcia and R.~M. Garcia, ``Parallel algorithm for multiplication on
  elliptic curves,'' in {\em Proceedings of the 2001 Mexican International
  Conference on Computer Science (ENC 2001)}, Springer, 2001.

\bibitem{nocker}
M.~N\"{o}cker, ``Some remarks on parallel exponentiation,'' in {\em Proceedings
  of the 2000 International Symposium on Symbolic and Algebraic Computation
  (ISSAC 2000)}, pp.~250--257, ACM, 2000.

\bibitem{borges2017parallel}
F.~Borges, P.~Lara, and R.~Portugal, ``Parallel algorithms for modular
  multi-exponentiation,'' {\em Applied Mathematics and Computation}, vol.~292,
  pp.~406--416, 2017.

\bibitem{izu2002fast}
T.~Izu and T.~Takagi, ``A fast parallel elliptic curve multiplication resistant
  against side channel attacks,'' in {\em Proceedings of the 2002 International
  Workshop on Public Key Cryptography}, pp.~280--296, Springer, 2002.

\bibitem{wrNAF}
T.~Takagi, D.~J. Reis, S.-M. Yen, and B.-C. Wu, ``Radix-{$r$} non-adjacent form
  and its application to pairing-based cryptosystem,'' {\em IEICE Transactions
  on Fundamentals of Electronics, Communications and Computer Sciences},
  vol.~89, no.~1, pp.~115--123, 2006.

\bibitem{inscrypt}
J.~Robert, ``Software implementation of parallelized {ECSM} over binary and
  prime fields,'' in {\em Proceedings of the 2014 International Conference on
  Information Security and Cryptography (Inscrypt 2014)}, pp.~445--462,
  Springer, 2014.

\bibitem{munro}
A.~B. Borodin and I.~Munro, ``Notes on efficient and optimal algorithms,'' {\em
  U. of Toronto, Toronto, Canada, and U. of Waterloo, Waterloo, Canada}, 1972.

\bibitem{twistedEd}
D.~J. Bernstein, P.~Birkner, M.~Joye, T.~Lange, and C.~Peters, ``Twisted
  {Edwards} curves,'' in {\em Proceedings of the 2008 International Conference
  on Cryptology in Africa (AfricaCrypt 2008)}, pp.~389--405, Springer, 2008.

\bibitem{web}
``Explicit-formulas database.'' \url{https://hyperelliptic.org/EFD/index.html}.
\newblock Accessed: 2016-07-05.

\bibitem{moreno2011spa}
C.~Moreno and M.~A. Hasan, ``{SPA-resistant} binary exponentiation with optimal
  execution time,'' {\em Journal of Cryptographic Engineering}, vol.~1, no.~2,
  pp.~87--99, 2011.

\bibitem{mthw}
C.~Heuberger and H.~Prodinger, ``On minimal expansions in redundant number
  systems: Algorithms and quantitative analysis,'' {\em Computing}, vol.~66,
  no.~4, pp.~377--393, 2001.

\bibitem{mccool2012structured}
M.~D. McCool, A.~D. Robison, and J.~Reinders, {\em Structured parallel
  programming: patterns for efficient computation}.
\newblock Elsevier, 2012.

\end{thebibliography}

\end{document}